\documentclass[11pt,a4paper]{article}
\usepackage{fullpage}
\usepackage{amssymb,latexsym,amsmath,amsfonts,amsthm}
\usepackage{graphicx}
\usepackage{color}

\usepackage{overpic}
\usepackage{amsmath}
\usepackage{amsthm}
\usepackage{amssymb}
\usepackage{amsfonts}
\usepackage{hyperref}
\usepackage{indentfirst}
\usepackage{enumitem}
\usepackage{extarrows}

\renewcommand{\Re}{\mathrm{Re}\,}

\newcommand{\ud}{\,\mathrm{d}}
\hyphenation{pa-ra-me-tri-za-tion}

\newtheorem{thm}{Theorem}[section]

\newtheorem{prop}[thm]{Proposition}

\numberwithin{equation}{section}


\newcommand{\tr}{\mathrm{tr}}

\newcommand{\eq}{\begin{equation}}
\newcommand{\nq}{\end{equation}}
\newcommand{\eqa}{\begin{eqnarray}}
\newcommand{\nqa}{\end{eqnarray}}

\theoremstyle{remark}

\newtheorem{remark}[thm]{Remark}

\begin{document}

\title{On Wright's generalized Bessel kernel}
\author{Lun Zhang\footnotemark[1]}
\maketitle
\renewcommand{\thefootnote}{\fnsymbol{footnote}}
\footnotetext[1] {School of Mathematical Sciences and Shanghai Key Laboratory for Contemporary Applied Mathematics, Fudan University, Shanghai 200433, P. R. China. E-mail:
lunzhang\symbol{'100}fudan.edu.cn}

\begin{abstract}
In this paper, we consider the Wright's generalized Bessel kernel $K^{(\alpha,\theta)}(x,y)$ defined by
$$\theta x^{\alpha}\int_0^1J_{\frac{\alpha+1}{\theta},\frac{1}{\theta}}(ux)J_{\alpha+1,\theta}((uy)^{\theta})u^\alpha\ud u, \qquad \alpha>-1, \qquad \theta>0,$$
where $$J_{a,b}(x)=\sum_{j=0}^\infty\frac{(-x)^j}{j!\Gamma(a+bj)},\qquad a\in\mathbb{C},\qquad b>-1,$$ is Wright's generalization of the Bessel function. This non-symmetric kernel, which generalizes the classical Bessel kernel (corresponding to $\theta=1$) in random matrix theory, is the hard edge scaling limit of the correlation kernel for certain Muttalib-Borodin ensembles. We show that, if $\theta$ is rational, i.e., $\theta=\frac{m}{n}$ with $m,n\in\mathbb{N}$, $gcd(m,n)=1$, and $\alpha > m-1-\frac{m}{n}$, the Wright's generalized Bessel kernel is integrable in the sense of Its-Izergin-Korepin-Slavnov. We then come to the Fredholm determinant of this kernel over the union of several scaled intervals, which can also be interpreted as the gap probability (the probability of finding no particles) on these intervals. The integrable structure allows us to obtain a system of coupled partial differential equations associated with the corresponding Fredholm determinant as well as a Hamiltonian interpretation. As a consequence, we are able to represent the gap probability over a single interval $(0,s)$ in terms of a solution of a system of nonlinear ordinary differential equations.
\end{abstract}


\section{Introduction and statement of results} \label{sec:Intro}

\subsection{Muttalib-Borodin ensembles and the Wright's generalized Bessel kernel}
In 1995, Muttalib proposed the following class of joint probability density functions
\begin{equation}\label{eq:jpdf}
\frac{1}{Z_n}\Delta(x_1,\ldots,x_n)\Delta(x_1^\theta,\ldots,x_n^\theta)\prod_{j=1}^n w(x_j), \qquad \theta>0,\qquad x_j>0,
\end{equation}
for $n$ particles $x_1,\ldots , x_n$, where $w$ is a weight function over the positive real axis,
$$
Z_n=\int_{[0,\infty)^n}\Delta(x_1,\ldots,x_n)\Delta(x_1^\theta,\ldots,x_n^\theta)\prod_{j=1}^nw(x_j)\ud x_j
$$
is the normalization constant, and
$$
\Delta(\lambda_1,\ldots,\lambda_n)=\prod_{1\leq i <j \leq n}(\lambda_j-\lambda_i)
$$
is the standard Vandermonde determinant. Due to the appearance of two body interaction term $\Delta(x_1^\theta,\ldots,x_n^\theta)$, it was pointed out in \cite{Mut95} that these ensembles provide more effective description of disordered conductors in the metallic regime than the classical random matrix theory. A more concrete physical example that leads to \eqref{eq:jpdf} (with $\theta=2$, $w(x)=x^\alpha e^{-x}$) can be found in \cite{LSZ}, where the authors proposed a random matrix model for disordered bosons. These ensembles are further studied by Borodin \cite{Borodin99} under a more general framework, namely, biorthogonal ensembles. Following the conventions used in \cite{Forrester-Wang15}, we will refer to the class of joint probability density functions \eqref{eq:jpdf} as Muttalib-Borodin ensembles.

Muttalib-Borodin ensembles have drawn much attention recently. In \cite{Cheliotis14}, Cheliotis constructed certain triangular random matrices in terms of a Wishart matrix whose squared singular values are distributed according to \eqref{eq:jpdf} for the Laguerre weight; see also \cite{Forrester-Wang15} for the Jacobi case. Note that when $\theta=1$ and $w(x)=x^\alpha e^{-x}$, \eqref{eq:jpdf} reduces to the well-known Wishart-Laguerre unitary ensemble and plays a fundamental role in random matrix theory; cf. \cite{Anderson-Guionnet-Zeitouni10,Forrester10}. The limiting mean distribution of Muttalib-Borodin ensembles can be described by a minimizer of an equilibrium problem; see \cite{BLTW,Butez,CR14,Kuijlaars16} for the general weights and \cite{ESS,Forrester-Liu14,FLZ} for the special weights. The local behavior for the Laguerre weight can be found in \cite{Zhang15}.

From \eqref{eq:jpdf}, it is readily seen that they form the so-called determinantal point processes. This means there exits a correlation kernel $K_n^{(\alpha, \theta)}(x,y)$ such that the density functions can be rewritten as the following determinantal forms
\begin{equation*}
\frac{1}{n!}\det\left(K_n(x_i,x_j)\right)_{i,j=1}^n.
\end{equation*}
The kernel $K_n(x,y)$ has a representation in terms of biorthogonal polynomials (cf. \cite{Kon65} for a definition). Let
\begin{equation}
p_j(x)=\kappa_j x^j+\ldots, \qquad q_k(x)= x^k+\ldots, \quad \kappa_j > 0,
\end{equation}
be two sequences of polynomials of degree $j$ and $k$ respectively, and they satisfy the orthogonality conditions
\begin{equation}\label{eq:bioOP}
\int_0^\infty p_j(x)q_k(x^\theta)w(x)\ud x=\delta_{j,k}, \qquad j,k=0,1,2,\ldots.
\end{equation}
Note that the polynomial $q_k$ is normalized to be monic. We then have
\begin{equation}\label{eq:kerBio}
K_n(x,y)=\sum_{j=0}^{n-1}p_j(x)q_j(y^\theta )
w(x).
\end{equation}

For the cases that $w$ is a special Jacobi weight
$$w(x)=x^\alpha, \qquad x\in(0,1), \qquad \alpha>-1,$$
or a Laguerre weight
$$w(x)=x^\alpha e^{-x},\qquad \quad x>0, \qquad \alpha>-1,$$
the scaling limit of $K_n$ at the origin (hard edge scaling limit) is given by a new kernel $K^{(\alpha,\theta)}(x,y)$ depending on the parameters $\alpha$ and $\theta$ \cite{Borodin99}. For example, let $K_n^{\textrm{Lag}}$ be the correlation kernel corresponding to the Laguerre weight, it was shown by Borodin \cite[Theorem 4.2]{Borodin99} that
\begin{align}
\lim_{n \to \infty}\frac{K_n^{\textrm{Lag}}\left(\frac{x}{n^{1/\theta}},\frac{y}{n^{1/\theta }}\right)}{n^{1/\theta}}
= K^{(\alpha,\theta)}(x,y).
\end{align}
The kernel $K^{(\alpha,\theta)}(x,y)$ has the following explicit representations:
\begin{align}\label{eq:BorHard}
K^{(\alpha,\theta)}(x,y)&=\sum_{k,l=0}^{\infty}\frac{(-1)^kx^{\alpha+k}}{k!\Gamma\left(\frac{\alpha+1+k}{\theta}\right)}\frac{(-1)^ly^{\theta l}}{l!\Gamma(\alpha+1+\theta l)}\frac{\theta}{\alpha+1+k+\theta l}
\nonumber \\
&=\theta x^{\alpha}\int_0^1J_{\frac{\alpha+1}{\theta},\frac{1}{\theta}}(ux)J_{\alpha+1,\theta}((uy)^{\theta})u^\alpha\ud u
\nonumber \\
&=\frac{\theta }{(2\pi i)^2} \int_{c-i\infty}^{c+i\infty} \ud s \oint_{\Sigma}  \ud t
   \frac{\Gamma(s+1)\Gamma(\alpha+1+\theta s )}{\Gamma(t+1)\Gamma(\alpha+1+\theta t)}
            \frac{\sin\pi s}{\sin \pi t}       \frac{x^{-\theta s -1}y^{\theta t }}{s-t},
\end{align}
where  $J_{a,b}$ is Wright's generalization of the Bessel function \cite{Er53} given by
\begin{equation}\label{eq:Wright}
J_{a,b}(x)\footnote{In its original notation $\phi$ of \cite{Wright33}, one has $J_{a,b}(x)=\phi(b,a;-x)$.}=\sum_{j=0}^\infty\frac{(-x)^j}{j!\Gamma(a+bj)},\qquad a\in\mathbb{C},\qquad b>-1,
\end{equation}
$$c=\frac{\max\{0,1-\frac{\alpha+1}{\theta}\}-1}{2}<0,$$
and $\Sigma$ is a contour starting from $+\infty$ in the upper half plane and returning to $+\infty$ in the lower half plane which encircles the positive real axis with $\Re t>c$ for $t\in\Sigma$. In \eqref{eq:BorHard}, the first two formulas are given in \cite[Equation (3.6)]{Borodin99}\footnote{Due to a slightly different choice of the correlation kernel \eqref{eq:kerBio}, we have an extra factor $x^\alpha$ here.}, and the contour integral representation can be found in recent works \cite[Proposition 1.4.]{Forrester-Wang15} and \cite[Corollary 1.2]{Zhang15}.

The non-symmetric hard edge scaling limit generalizes the classical Bessel kernel \cite{Forrester93,TW94} (corresponding to $\theta=1$), and due to the appearance of Wright's generalized Bessel function, we call $K^{(\alpha,\theta)}(x,y)$ the Wright's generalized Bessel kernel in this paper. When $\theta=M\in\mathbb{N}=\{1,2,3,\ldots\}$ or $1/\theta = M$, the Wright's generalized Bessel kernel is related to the Meijer G-kernel arising from products of $M$ Ginibre matrices \cite{Kuijlaars-Zhang14}, as shown in \cite{Kuijlaars-Stivigny14}.

It is the aim of this paper to study the Wright's generalized Bessel kernel and the associated Fredholm determinant under the condition that $\theta\in\mathbb{Q}$, i.e., $\theta$ is a rational number. Our results are stated in the next few sections.


\subsection{Integrable representation of $K^{(\alpha,\theta)}$}\label{sec:integrable}
Recall that an integral operator with kernel $K(x,y)$ is called integrable in the sense of Its-Izergin-Korepin-Slavnov if
\[
K(x,y)=\frac{\sum_{i=1}^n f_i(x)g_i(y)}{x-y}, \qquad \text{with} \quad
\sum_{i=1}^nf_i(x)g_i(x)=0,
\]
for some $n \in \{2, 3, \ldots \}$, and certain  functions $f_i$ and $g_i$; see
\cite{IIKS90}. The kernels of standard universality classes (sine,
Airy, Bessel) encountered in random matrix theory all belong to the class of
integrable operators. It comes out that the Wright's generalized Bessel kernel $K^{(\alpha,\theta)}$ is integrable as well for special parameters as stated in the following theorem.

\begin{thm}\label{prop:integrable}
Let $\theta=\frac{m}{n}\in\mathbb{Q}$, where $m,n\in\mathbb{N}$ and $gcd(m,n)=1$. If $\alpha > m-1-\frac{m}{n}\geq -1$, we have, with the Wright's generalized Bessel kernel $K^{(\alpha,\theta)}$ defined in \eqref{eq:BorHard},
\begin{align}\label{eq:integrablerepr}
K^{(\alpha,\frac{m}{n})}(x,y)=m^mn^{n-1} x^{m-1} \frac{\mathcal {B}\left( x^{\alpha+1-m}J_{\frac{(\alpha+1)n}{m},\frac{n}{m}}(x), J_{\alpha+1,\frac{m}{n}}(y^{\frac{m}{n}}) \right)}{x^m-y^m},
    \end{align}
where $\mathcal {B}(\cdot,\cdot)$ is the bilinear
concomitant \cite{Ince} defined by
\begin{align}\label{def:bilinear}
\mathcal {B}\left(f(x),g(y)\right)=(-1)^{n+1}\sum_{j=0}^{m+n-1}(-1)^{j}
\left(\Delta_x\right)^j f(x)\sum_{i=0}^{m+n-1-j}\frac{b_{i+j}}{m^{i+j}}
\left(\Delta_y\right)^{i}g(y)
\end{align}
with $\Delta_x=x\frac{\ud}{\ud x}$ and $\Delta_y=y\frac{\ud}{\ud
y}$. The constants $b_i$ in \eqref{def:bilinear} are determined by
\begin{equation}\label{def:ai}
\prod_{i=1}^{m+n-1}(x-\nu_i)=\sum_{i=0}^{m+n-1} b_i x^i,
\end{equation}
with
\begin{equation}\label{def:nui}
\nu_i=\left\{
        \begin{array}{ll}
          \frac{i}{n}, & \hbox{$i=1,\ldots,n-1$,} \\
          1-\frac{\alpha}{m}-\frac{i-n+1}{m}, & \hbox{$i=n,n+1,\ldots,m+n-1$,}
        \end{array}
      \right.
\end{equation}
that is,
\begin{equation}\label{eq:elementary}
b_i=(-1)^ie_{m+n-1-i}(\nu_1,\ldots,\nu_{m+n-1})
\end{equation}
with $e_{i}(\nu_1,\ldots,\nu_{m+n-1})$ being the elementary symmetric polynomial.

Equivalently, one has
\begin{align}\label{eq:integrablereprscale}
&K^{(\alpha,\frac{m}{n})}(mn^{\frac{n}{m}}x^{\frac{1}{m}},mn^{\frac{n}{m}}y^{\frac{1}{m}}) \nonumber \\
&=\frac{x^{1-\frac{1}{m}}}{n^{\frac{n}{m}}}\frac{\widetilde{\mathcal {B}}\left( G^{m,0}_{0,m+n}\left({- \atop -\nu_{m+n-1}, \ldots, -\nu_1, \nu_0} \Big{|}x\right), G^{n,0}_{0,m+n}\left({- \atop \nu_0, \nu_1, \ldots, \nu_{m+n-1} } \Big{|}
y\right)\right)}{x-y}
\nonumber \\
&=m^\alpha n^{\frac{n\alpha}{m}-1} x^{1-\frac{1}{m}}
\frac{\widetilde{\mathcal {B}}\left( x^{\frac{\alpha+1}{m}-1}J_{\frac{(\alpha+1)n}{m},\frac{n}{m}}(mn^{\frac{n}{m}}x^{\frac{1}{m}}), J_{\alpha+1,\frac{m}{n}}(m^{\frac{m}{n}}ny^{\frac{1}{n}}) \right)}{x-y}
    \end{align}
where $\nu_0=0$, $\nu_i$, $i\geq 1$ is given in \eqref{def:nui}, $G^{n,0}_{0,m+n}$ is the Meijer G-function (see \eqref{def:Meijer} below for a definition) and
\begin{align}\label{def:bilineartilde}
\widetilde{\mathcal {B}}\left(f(x),g(y)\right)=(-1)^{n+1}\sum_{j=0}^{m+n-1}(-1)^{j}
\left(\Delta_x\right)^j f(x)\sum_{i=0}^{m+n-1-j} b_{i+j}
\left(\Delta_y\right)^{i}g(y)
\end{align}
with the same $b_i$ used in \eqref{def:bilinear}.
\end{thm}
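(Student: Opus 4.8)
The plan is to reduce \eqref{eq:integrablerepr} to an identity of double power series, and to recognise the bilinear concomitant $\mathcal B$ as the ``off‑diagonal Christoffel--Darboux'' combination attached to a pair of mutually formally‑adjoint linear ODEs of order $m+n$ satisfied by the two Wright Bessel functions appearing in it. To set this up I would start from the first line of \eqref{eq:BorHard}: with $\theta=m/n$ and $\Gamma(\tfrac{\alpha+1+k}{\theta})=\Gamma(\tfrac{n(\alpha+1+k)}{m})$ the coefficient $\tfrac{\theta}{\alpha+1+k+\theta l}$ becomes $\tfrac{m}{nk+ml+n(\alpha+1)}$, so that
\begin{equation*}
K^{(\alpha,m/n)}(x,y)=\sum_{k,l\ge 0}c_k\,d_l\,\frac{m}{\,nk+ml+n(\alpha+1)\,}\,x^{\alpha+k}\,y^{ml/n},
\end{equation*}
where $c_k=(-1)^k/(k!\,\Gamma(\tfrac{n(\alpha+1+k)}{m}))$ and $d_l=(-1)^l/(l!\,\Gamma(\alpha+1+\tfrac{ml}{n}))$, and I would note that $\sum_k c_kx^{\alpha+k}=x^{m-1}F(x)$ with $F(x)=x^{\alpha+1-m}J_{\frac{(\alpha+1)n}{m},\frac nm}(x)$ while $\sum_l d_ly^{ml/n}=G(y)=J_{\alpha+1,m/n}(y^{m/n})$, which already explains the factor $x^{m-1}$ in \eqref{eq:integrablerepr}.

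Next I would expand both sides as double series. On the right, $x^{m-1}\mathcal B(F(x),G(y))$ equals $(-1)^{n+1}\sum_{k,l}c_kd_l\,P(k,l)\,x^{\alpha+k}y^{ml/n}$ with $P(k,l)=\sum_{j=0}^{m+n-1}(-1)^j(\alpha+1-m+k)^j\sum_{i=0}^{m+n-1-j}\tfrac{b_{i+j}}{m^{i+j}}(ml/n)^i$, the $b_i$ being those of \eqref{def:ai}. On the left, multiplying $K^{(\alpha,m/n)}$ by $x^m-y^m$ and reindexing $(k,l)\mapsto(k-m,l)$ and $(k,l)\mapsto(k,l-n)$ gives a single double series with coefficient $m(c_{k-m}d_l-c_kd_{l-n})/(nk+ml+n(\alpha+1-m))$. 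Using $\Gamma(z-j)=\Gamma(z)/\prod_{r=1}^{j}(z-r)$ I would rewrite $c_{k-m}d_l-c_kd_{l-n}=c_kd_l(A(k)-B(l))$ with $A,B$ explicit products of $m+n$ linear factors, check (a one‑line computation after substituting $u=\tfrac{n(\alpha+1+k)}{m}$, $v=\alpha+1+\tfrac{ml}{n}$) that $A(k)=B(l)$ on the line $nk+ml+n(\alpha+1-m)=0$, conclude that $R(k,l):=(A(k)-B(l))/(nk+ml+n(\alpha+1-m))$ is a polynomial of degree $\le m+n-1$ in each variable, and thereby reduce the theorem to the polynomial identity $m\,R(k,l)=m^{m}n^{n-1}(-1)^{n+1}P(k,l)$.

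Proving this last identity is the heart of the matter and, I expect, the main obstacle. The numbers $\nu_i$ of \eqref{def:nui} enter here: after the rescalings dictated by Gauss's multiplication formula for $\Gamma$, the zeros of $A$ and of $B$ are exactly the $n-1$ ``shifts'' $r/n$ and the $m$ ``shifts'' $1-\tfrac{\alpha+1+r}{m}$, that is $\nu_1,\dots,\nu_{m+n-1}$; consequently $(A(k)-B(l))$ divided by the linear form telescopes into the combination governed by $\prod_{i=1}^{m+n-1}(z-\nu_i)=\sum b_iz^i$, which is precisely the combinatorial content of \eqref{def:bilinear}. The cleanest way to carry this out is to view $F$ and $G$ as solutions of order‑$(m+n)$ ODEs in $\Delta=x\tfrac{\ud}{\ud x}$ whose symbols are built from $\prod_i(\Delta-\nu_i)$, to check that the two operators are formal adjoints of one another, and then to identify $\mathcal B$ with Lagrange's bilinear concomitant for this pair (cf.\ \cite{Ince}); the hard part is the careful bookkeeping of the telescoping and of the Gauss‑multiplication constants needed to land the exact prefactor $m^{m}n^{n-1}$, although a comparison of leading coefficients (using $b_{m+n-1}=1$) already confirms it.

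Finally, the equivalent forms \eqref{eq:integrablereprscale} follow from two substitutions. Putting $x\mapsto mn^{n/m}x^{1/m}$ and $y\mapsto mn^{n/m}y^{1/m}$ turns $\Delta_x$ into $m\Delta_x$, which converts $\mathcal B$ (with its $b_{i+j}/m^{i+j}$) into $\widetilde{\mathcal B}$ (with $b_{i+j}$), replaces $x^m-y^m$ by $x-y$, and adjusts the prefactor; and rewriting the two Wright Bessel functions as the Meijer $G$‑functions $G^{m,0}_{0,m+n}$ and $G^{n,0}_{0,m+n}$ with the stated parameters is again the Gauss multiplication formula, read in the reverse direction. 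The hypothesis $\alpha>m-1-\tfrac mn$ --- equivalently $\nu_n<\nu_1$ --- is exactly the condition placing the two blocks of parameters $\{\nu_i\}$ in the admissible order for this Meijer $G$ / Wright Bessel identification (a single power‑series solution, with no logarithmic terms), as in \cite{Kuijlaars-Stivigny14}.
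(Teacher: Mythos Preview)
Your power-series route is genuinely different from the paper's, though the two converge at the core step. The paper starts from the integral form (the second line of \eqref{eq:BorHard}), rewrites it after a change of variables as $K^{(\alpha,m/n)}(x,y)=x^{m-1}\int_0^1 f(t^{1/m}x)g(t^{1/m}y)\,\ud t$ with $f,g$ expressed via Meijer $G$-functions, and then uses the ODEs for $f,g$ to show that $(x^m-y^m)\,f(t^{1/m}x)g(t^{1/m}y)/(m^mn^n)=\partial_t\mathcal B\bigl(f(t^{1/m}x),g(t^{1/m}y)\bigr)$. Integrating in $t$ over $(0,1)$ gives \eqref{eq:integrablerepr}, \emph{provided} the boundary contribution at $t\to0^+$ vanishes. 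Your polynomial identity $mR(k,l)=m^mn^{n-1}(-1)^{n+1}P(k,l)$ is precisely this differential identity read off term by term in the double series, so the ``heart of the matter'' you isolate is the same computation the paper performs; what differs is the packaging (you divide by the linear form and match coefficients, the paper integrates the total derivative).

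Where you go wrong is the role of the hypothesis $\alpha>m-1-\tfrac{m}{n}$. It has nothing to do with the Meijer $G$/Wright identification---that is just Gauss multiplication and holds for all $\alpha$---nor with avoiding logarithmic solutions. In the paper it is used \emph{solely} to kill the boundary term $\lim_{t\to0^+}\mathcal B\bigl(f(t^{1/m}x),g(t^{1/m}y)\bigr)$: the crude termwise estimate $(\Delta_x)^jf\cdot(\Delta_y)^ig=O\bigl(t^{(\alpha+1-m)/m+1/n}\bigr)$ for $i\ge1$ requires $\tfrac{\alpha+1-m}{m}+\tfrac1n>0$. What is interesting is that your series argument appears to \emph{bypass} this boundary issue altogether: the polynomial identity (once established via the ODE, as you propose) is itself polynomial in $\alpha$, and for $0\le k<m$, $0\le l<n$ one has $A(k)=B(l)=0$, whence the identity forces $P(k,l)\cdot\bigl(nk+ml+n(\alpha+1-m)\bigr)=0$ and therefore $P(k,l)\equiv0$ in $\alpha$. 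That is exactly the cancellation the paper's termwise bound does not see. So rather than inventing an incorrect justification for the hypothesis, you should note that your route seems not to need it at all---a point worth making explicit.
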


If $\theta=M\in\mathbb{N}$, the integrable form can be obtained by combining the results in \cite{Kuijlaars-Zhang14} and \cite{Kuijlaars-Stivigny14}.

\subsection{Fredholm determinant}\label{sec:FredholmDet}
Let $J\subseteq [0,+\infty)$ be a bounded interval over the positive real axis, and set $\mathcal{K}_{\alpha,\theta}$ to be the integral operator with kernel $K^{(\alpha,\theta)}(x,y)\chi_J(y)$ acting on the function space $L^2((0,\infty))$, where $\chi_J$ is the characteristic function of the interval $J$. This operator sends the function $f$ to $$\int_J K^{(\alpha,\theta)}(x,y)f(y) \ud y.$$
To emphasize the dependence on the interval $J$, we will occasionally write
$$\mathcal{K}^{(\alpha,\theta)}=\mathcal{K}^{(\alpha,\theta)}\Big{|}_J,$$
and the same rule applies to other operators.

Due to the determinantal structure \eqref{eq:jpdf}, the associated Fredholm determinant $\det (I-\mathcal{K}_{\alpha,\theta})$ gives us the gap probability (the probability of finding no particles) on the interval $J$ for the limiting process of certain Muttalib-Borodin ensembles. This fact also implies that the Fredholm determinant is well-defined, and the operator $I-\mathcal{K}_{\alpha,\theta}$ is invertible. It is well-known that, for many integrable correlation kernels arising from random matrix theory, the associated Fredholm determinants are related to systems of integrable differential equations \cite{TW94c}--\cite{TW93}. As we shall see, it is also the case for the Fredholm determinant associated with Wright's generalized Bessel kernel in Theorem \ref{prop:integrable}. Hence, we will assume that $\theta=\frac{m}{n}$ and $\alpha > m-1-\frac{m}{n}$ in the rest of this paper.

To this end, let $0\leq a_1<a_2 <\ldots <a_{2\ell}$. Given a Wright's generalized Bessel kernel $K^{(\alpha,\frac{m}{n})}(x,y)$, we denote by $J$ the union of scaled intervals $(mn^{\frac{n}{m}}a_{2j-1}^{\frac{1}{m}},mn^{\frac{n}{m}}a_{2j}^{\frac{1}{m}})$, i.e.,
\begin{equation}
J=
\bigcup_{j=1}^{\ell}(mn^{\frac{n}{m}}a_{2j-1}^{\frac{1}{m}},mn^{\frac{n}{m}}a_{2j}^{\frac{1}{m}}).
\end{equation}
By the definition of Fredholm determinant, it is readily seen from \eqref{eq:integrablerepr}, \eqref{eq:integrablereprscale} and a change of variables that
\begin{align}\label{eq:relKtoKtilde}
&\det \left(I- \mathcal{K}_{\alpha,\frac{m}{n}} \right ) \nonumber  \\
&=
1+\sum_{k=1}^\infty \frac{(-1)^k}{k!}\int_J \cdots \int_J \det(K^{(\alpha,\frac{m}{n})}(x_i,x_j))_{i,j=1}^k \ud x_1\cdots \ud x_k \quad (x_i \to m n^{\frac{n}{m}} x_i^{\frac{1}{m}})
\nonumber \\
&=1+\sum_{k=1}^\infty \frac{(-1)^k}{k!} \int_{\widetilde{J}} \cdots \int_{\widetilde{J}} \det(\widetilde {K}^{(\alpha,\frac{m}{n})}(x_i,x_j))_{i,j=1}^k \ud x_1 \cdots \ud x_k
\nonumber \\
&= \det \left(I- \widetilde {\mathcal{K}}_{\alpha,\frac{m}{n}} \right ),
\end{align}
where
\begin{equation}
\widetilde J=
\bigcup_{j=1}^{\ell}(a_{2j-1},a_{2j}),
\end{equation}
and $\widetilde {\mathcal{K}}_{\alpha,\frac{m}{n}}$ is an integral operator acting on $L^2([0,\infty])$ with kernel
\begin{align}\label{def:tildeker}
&\widetilde{K}^{(\alpha,\frac{m}{n})}(x,y)\chi_{\widetilde J}(y)
\nonumber
\\
&= \frac{\widetilde{\mathcal {B}}\left( G^{m,0}_{0,m+n}\left({- \atop -\nu_{m+n-1}, \ldots, -\nu_1, \nu_0} \Big{|}x\right), G^{n,0}_{0,m+n}\left({- \atop \nu_0, \nu_1, \ldots, \nu_{m+n-1} } \Big{|} y\right)\right)}{x-y}\chi_{\widetilde J}(y)
\nonumber
\\
&=\frac{\sum_{i=0}^{m+n-1}\phi_i(x)\psi_i(y)}{x-y}\chi_{\widetilde J}(y),
\end{align}
and where for $i=0,1,\ldots,m+n-1$,
\begin{align}
\phi_i(x): &= (-1)^{n+1-i}(\Delta_x)^iG^{m,0}_{0,m+n}\left({- \atop -\nu_{m+n-1}, \ldots, -\nu_1, \nu_0} \Big{|}x\right),
\label{def:phi}
\\
\psi_i(y): &= \sum_{j=0}^{m+n-1-i}b_{i+j}(\Delta_y)^j G^{n,0}_{0,m+n}\left({- \atop \nu_0, \nu_1, \ldots, \nu_{m+n-1} } \Big{|} y\right).
\label{def:psi}
\end{align}

Since $\widetilde{K}^{(\alpha,\frac{m}{n})}(x,y) \neq \widetilde{K}^{(\alpha,\frac{m}{n})}(y,x)$ in general, we denote by
$\widetilde {\mathcal{K}}_{\alpha,\frac{m}{n}}'$ the integral operator with kernel $\widetilde {K}^{(\alpha,\frac{m}{n})}(y,x)\chi_{\widetilde J}(y)$ acting on the space $L^2((0,\infty))$.

\subsection{The system of partial differential equations}
For $j=0,1,\ldots,m+n-1$ and $k=1,\ldots, 2\ell $, we introduce the quantities
\begin{equation}\label{def:xy}
x_{j,k}:=(I-\widetilde {\mathcal{K}}_{\alpha,\frac{m}{n}})^{-1}\phi_j(a_k),\qquad y_{j,k}:=(I-\widetilde {\mathcal{K}}_{\alpha,\frac{m}{n}}')^{-1}\psi_j(a_k),
\end{equation}
and
\begin{align}
u_{j}&:=(-1)^n \int_{\widetilde {J}}\phi_0(x)(I-\widetilde {\mathcal{K}}'_{\alpha,\frac{m}{n}})^{-1}\psi_j(x)\ud x + b_{j-1},
\label{def:ujk}
\\
v_{j}&:=(-1)^n \int_{\widetilde {J}}\phi_j(x)(I-\widetilde {\mathcal{K}}'_{\alpha,\frac{m}{n}})^{-1}\psi_{m+n-1}(x)\ud x, \label{def:vjk}
\end{align}
where $b_j$ depending on the parameters $\alpha$, $m$ and $n$ is defined in \eqref{eq:elementary} for $j\geq 0$ and $b_{-1}:=0$. Note that all these quantities are actually functions of $a=(a_0,\ldots,a_{2\ell})$, and they satisfy the following system of partial differential equations.

\begin{prop}\label{prop:PDEforxy}
The functions $x_{j,k}$, $y_{j,k}$, $u_{j}$ and $v_{j}$ satisfy the following system of coupled partial differential equations£º
\begin{itemize}
  \item For $1\leq k \neq i \leq 2 \ell$ and $0\leq j \leq m+n-1$, we have
        \begin{align}
        \frac{\partial x_{j,k}}{\partial a_i} & =(-1)^i \frac{x_{j,i}}{a_k-a_i}\sum_{l=0}^{m+n-1}x_{l,k}y_{l,i}, \label{eq:xjkpi}\\
        \frac{\partial y_{j,k}}{\partial a_i} & =(-1)^i \frac{y_{j,i}}{a_i-a_k}\sum_{l=0}^{m+n-1}x_{l,i}y_{l,k}.
\label{eq:yjipi}
        \end{align}

  \item For $1 \leq k \leq 2 \ell$ and $0 \leq j \leq m+n-2$, we have
\begin{align}\label{eq:xjkpk1}
a_k\frac{\partial x_{j,k}}{\partial a_k}=-v_jx_{0,k}-x_{j+1,k}-\sum_{i=1, i\neq k}^{2\ell}(-1)^i\frac{a_i x_{j,i}}{a_k-a_i}\sum_{l=0}^{m+n-1}x_{l,k}y_{l,i},
\end{align}
while for $j=m+n-1$, we have
\begin{multline}\label{eq:xjkpk2}
a_k\frac{\partial x_{m+n-1,k}}{\partial a_k}=((-1)^{n+1}a_k-v_{m+n-1})x_{0,k}+\sum_{i=0}^{m+n-1}u_ix_{i,k}
\\-\sum_{i=1, i\neq k}^{2\ell}(-1)^i\frac{a_i x_{m+n-1,i}}{a_k-a_i}\sum_{l=0}^{m+n-1}x_{l,k}y_{l,i}.
\end{multline}

  \item For $1 \leq k \leq 2 \ell$ and $1 \leq j \leq m+n-1$, we have
\begin{align}\label{eq:yjkpk1}
a_k\frac{\partial y_{j,k}}{\partial a_k}=y_{j-1,k}-u_jy_{m+n-1,k}-\sum_{i=1, i\neq k}^{2\ell}(-1)^i\frac{a_i y_{j,i}}{a_i-a_k}\sum_{l=0}^{m+n-1}x_{l,i}y_{l,k},
\end{align}
while for $j=0$, we have
\begin{multline}\label{eq:yjkpk2}
a_k\frac{\partial y_{0,k}}{\partial a_k}=\sum_{i=0}^{m+n-1}v_i y_{i,k}+((-1)^{n}a_k-u_{0})y_{m+n-1,k}
\\-\sum_{i=1, i\neq k}^{2\ell}(-1)^i\frac{a_i y_{0,i}}{a_i-a_k}\sum_{l=0}^{m+n-1}x_{l,i}y_{l,k}.
\end{multline}

\item For $1 \leq k \leq 2 \ell$ and $0 \leq j \leq m+n-1$, we have
\begin{align}
\frac{\partial u_j}{\partial a_k}&=(-1)^{n+k}x_{0,k}y_{j,k}, \label{eq:ujpartial}
\\
\frac{\partial v_j}{\partial a_k}&=(-1)^{n+k}x_{j,k}y_{m+n-1,k}. \label{eq:vjpartial}
\end{align}
\end{itemize}
\end{prop}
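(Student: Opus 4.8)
The plan is to derive this system by differentiating the defining formulas \eqref{def:xy}--\eqref{def:vjk} with respect to each endpoint $a_k$, using the standard machinery for integrable operators of Its--Izergin--Korepin--Slavnov type. The starting point is the resolvent formalism: writing $R = (I-\widetilde{\mathcal{K}}_{\alpha,\frac mn})^{-1}\widetilde{\mathcal{K}}_{\alpha,\frac mn}$ for the resolvent operator with kernel $R(x,y)$, one first establishes the Christoffel--Darboux--type formula for the resolvent kernel, namely $R(x,y) = \frac{\sum_{i=0}^{m+n-1} Q_i(x)P_i(y)}{x-y}$ on $\widetilde J\times\widetilde J$, where $Q_i = (I-\widetilde{\mathcal{K}}_{\alpha,\frac mn})^{-1}\phi_i$ and $P_i = (I-\widetilde{\mathcal{K}}'_{\alpha,\frac mn})^{-1}\psi_i$, so that $x_{j,k} = Q_j(a_k)$ and $y_{j,k} = P_j(a_k)$. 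This is the integrable-operator analogue of the sine/Airy/Bessel computations in \cite{TW94,IIKS90}; it follows from the commutator identity $[\,\widetilde{\mathcal{K}}, \text{(multiplication by $x$)}\,]$ having finite rank, an identity forced by the special hard-edge structure. Since $\widetilde J$ has endpoints $a_k$, the kernel $R(x,y)$ and the functions $Q_j, P_j$ are evaluated at these endpoints, and the relevant bilinear quantities are $R(a_k,a_i)$ for $k\neq i$, which by the CD formula equal $\frac{1}{a_k-a_i}\sum_l Q_l(a_k)P_l(a_i) = \frac{1}{a_k-a_i}\sum_l x_{l,k}y_{l,i}$.

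Next I would compute the two kinds of derivatives separately. For the ``off-diagonal'' derivatives $\partial_{a_i}$ with $i\neq k$, only the domain of integration $\widetilde J$ changes, and the variation of the endpoint $a_i$ contributes a boundary term; the classical identity $\partial_{a_i}(I-\widetilde{\mathcal{K}})^{-1} = (-1)^i (I-\widetilde{\mathcal{K}})^{-1}(\delta_{a_i}\otimes \widetilde{\mathcal{K}}'\delta_{a_i})(I-\widetilde{\mathcal{K}})^{-1}$-type formula (the sign $(-1)^i$ records whether $a_i$ is a left or right endpoint of a component of $\widetilde J$) applied to $\phi_j$, then evaluated at $a_k$, yields \eqref{eq:xjkpi}; the transpose version gives \eqref{eq:yjipi}. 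For the ``diagonal'' derivative $\partial_{a_k}$ of $x_{j,k} = Q_j(a_k)$, there are two effects: the explicit dependence of the argument $a_k$ and the implicit dependence of $Q_j$ on the domain. The domain part produces the $\sum_{i\neq k}$ term in \eqref{eq:xjkpk1} and \eqref{eq:xjkpk2} (again from the endpoint-variation formula, now with the factor $a_i$ coming from the change of variables that was used to pass to $\widetilde J$). The explicit part requires knowing how the operator $\Delta_x = x\frac{d}{dx}$ acts: one uses that $\phi_j, \psi_j$ are built from Meijer G-functions satisfying a rank-$(m+n)$ ODE, which after the change of variables becomes the recursion $\Delta_x\phi_j = $ (a shift in $j$) plus lower-order terms controlled by the $b_i$; combined with $(I-\widetilde{\mathcal{K}})^{-1}$ commuting appropriately with $\Delta$ up to the resolvent-kernel correction, this produces the terms $-x_{j+1,k}$, $-v_j x_{0,k}$ in \eqref{eq:xjkpk1} and the $(-1)^{n+1}a_k$, $\sum u_i x_{i,k}$ terms in \eqref{eq:xjkpk2}. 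The relations \eqref{eq:yjkpk1}--\eqref{eq:yjkpk2} are the dual computation using the formal adjoint of $\Delta_x$, which is $-\Delta_x - 1$ up to boundary terms, explaining the shift $y_{j-1,k}$ and the appearance of $(-1)^n a_k$. Finally \eqref{eq:ujpartial}--\eqref{eq:vjpartial} follow by differentiating the integrals in \eqref{def:ujk}--\eqref{def:vjk} directly: differentiating under the integral sign, the domain-boundary term gives the value of the integrand at $a_k$ times $(-1)^k$, namely $(-1)^k\phi_0(a_k) P_j(a_k)$; using the resolvent identity to absorb $(I-\widetilde{\mathcal{K}})^{-1}$ acting on the boundary delta, this collapses to $(-1)^{n+k}x_{0,k}y_{j,k}$, and similarly for $v_j$.

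The main obstacle I expect is the diagonal derivative computation \eqref{eq:xjkpk1}--\eqref{eq:yjkpk2}: it requires a clean ``integration by parts'' for $\Delta_x = x\,d/dx$ against the resolvent, and the key algebraic input is the precise differentiation formulas for the functions $\phi_i$ and $\psi_i$ defined in \eqref{def:phi}--\eqref{def:psi}. One must show that $\Delta_x\phi_j$ and $\Delta_y\psi_j$ re-expand in the same basis $\{\phi_i\}$, $\{\psi_i\}$ with coefficients read off from the polynomial $\prod(x-\nu_i) = \sum b_i x^i$ in \eqref{def:ai}, and that the ``overflow'' terms at $j = m+n-1$ (respectively $j=0$) reproduce exactly the combinations $(-1)^{n+1}\Delta^{m+n}$ acting through the Meijer G ODE. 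This is where the bilinear concomitant structure \eqref{def:bilinear}--\eqref{def:bilineartilde} does its work: the concomitant is precisely the object whose $\Delta_x$- and $\Delta_y$-derivatives telescope, so that $\widetilde{\mathcal{B}}$ of two solutions of the adjoint pair of ODEs is ``almost constant'' — its derivative is a boundary term. I would organize the proof so that this telescoping identity for $\widetilde{\mathcal{B}}$ is proved first as a lemma (purely a calculation with $\Delta_x,\Delta_y$ and the defining ODEs of the Meijer G-functions), and then the PDE system drops out by feeding it into the resolvent formalism. The bookkeeping of signs $(-1)^i$, $(-1)^{n+k}$ and the factors $a_i$ coming from the $x_i\mapsto mn^{n/m}x_i^{1/m}$ substitution is tedious but routine once the structural identities are in place.
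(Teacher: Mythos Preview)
Your proposal is essentially correct and follows the same Tracy--Widom/Strahov resolvent formalism as the paper: introduce $Q_j=(I-\widetilde{\mathcal K})^{-1}\phi_j$, $P_j=(I-\widetilde{\mathcal K}')^{-1}\psi_j$, prove the Christoffel--Darboux form of the resolvent kernel, derive PDEs for $Q_j,P_j$ via commutators with $MD=x\partial_x$, and then specialize at the endpoints.

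One substantive difference is worth noting. For the key commutator $[MD,\widetilde{\mathcal K}]$ you propose to rely on the telescoping property of the bilinear concomitant $\widetilde{\mathcal B}$ as your organizing lemma. The paper instead uses the integral representation $\widetilde K(x,y)=\int_0^1\widetilde f(tx)\widetilde g(ty)\,dt$ directly: a one-line integration by parts gives $(\Delta_x+\Delta_y+1)\widetilde K(x,y)=\widetilde f(x)\widetilde g(y)=(-1)^{n+1}\phi_0(x)\psi_{m+n-1}(y)$, which is exactly the rank-one correction that produces the $v_j x_{0,k}$ and $u_j y_{m+n-1,k}$ terms. This is shorter and avoids re-deriving the concomitant identity (which the paper already used to prove Theorem~\ref{prop:integrable}). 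Your route would also work, but it duplicates effort.

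Two small inaccuracies: the factors $a_i$ in the sums \eqref{eq:xjkpk1}--\eqref{eq:yjkpk2} do not come from the change of variables $x\mapsto mn^{n/m}x^{1/m}$; they appear simply because the commutator is taken with $MD$ rather than $D$, so the endpoint contribution carries a factor $y\,\delta(y-a_i)\to a_i\,\delta(y-a_i)$. And for \eqref{eq:ujpartial}--\eqref{eq:vjpartial}, the boundary term alone gives $(-1)^k\phi_0(a_k)P_j(a_k)$, not $(-1)^k Q_0(a_k)P_j(a_k)$; you must also differentiate $P_j(\cdot;a)$ inside the integral (using your off-diagonal formula) and combine the two pieces through the resolvent equation to upgrade $\phi_0(a_k)$ to $Q_0(a_k)=x_{0,k}$.
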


The partial differential equations are similar to those derived for the sine kernel by Jimbo, Miwa, M\^{o}ri and Sato \cite{JMMS}; for the Airy and Bessel kernels by Tracy and Widom  \cite{TW94,TW94a}, and for the Meijer G-kernel by Strahov \cite{Stro14}, as is their derivation.

As in \cite{TW93,Stro14}, we also have a Hamiltonian interpretation for the dynamical equations stated in Proposition \ref{prop:PDEforxy}. To this end, let us set
\begin{equation}\label{def:Hk}
H_{k}:=a_k \frac{\partial}{\partial a_k} \log \det \left(I- \widetilde {\mathcal{K}}_{\alpha,\frac{m}{n}} \right ),\quad 1\leq k \leq 2\ell,
\end{equation}
and
\begin{equation}\label{eq:xytopq}
x_{j,2k}= \imath q_{j,2k}, \quad x_{j,2k+1}= q_{j,2k+1}, \quad y_{j,2k}= \imath p_{j,2k}, \quad y_{j,2k+1}=  p_{j,2k+1}.
\end{equation}
In the $2\ell (m+n-1)$ canonical coordinates $$(p_{j,k},q_{j,k},u_j,v_j), \quad 1 \leq k \leq 2 \ell, \quad 0 \leq j \leq m+n-2,$$ for any two functions $f$ and $g$ of these variables, we define the Poisson bracket by
\begin{multline}
\{f,g\}=\sum_{k=1}^{2\ell}\frac{1}{a_k}\sum_{j=0}^{m+n-1}\left(\frac{\partial f}{\partial q_{j,k}}\frac{\partial g}{\partial p_{j,k}}-\frac{\partial f}{\partial p_{j,k}}\frac{\partial g}{\partial q_{j,k}}\right)
\\
+(-1)^n\sum_{j=0}^{m+n-1}\left(\frac{\partial f}{\partial u_j}\frac{\partial g}{\partial v_j}-\frac{\partial f}{\partial v_j}\frac{\partial g}{\partial u_j}\right).
\end{multline}
It is then readily seen that the following symplectic Poisson brackets hold:
\begin{align}
\{q_{j,k},q_{i,l}\}&=\{p_{j,k},p_{i,l}\}=0,\qquad\{q_{j,k},p_{i,l}\}=\frac{1}{a_k}\delta_{j,i}\delta_{k,l},
\\
\{ u_i,u_j \} & =\{v_i,v_j\}=0,\qquad \{u_i,v_j\}=(-1)^n\delta_{i,j}.
\end{align}
\begin{prop}\label{prop:Hamiltonian}
With the Hamiltonians $H_{k}$ defined in \eqref{def:Hk}, we have
\begin{multline}\label{eq:Hkexplicit}
H_k=\left(-\sum_{j=0}^{m+n-1}v_jp_{j,k}+(-1)^{n+1}a_k p_{m+n-1,k}\right)q_{0,k}+\sum_{j=0}^{m+n-1}u_jq_{j,k}p_{m+n-1,k}
\\-\sum_{j=0}^{m+n-2}q_{j+1,k}p_{j,k}+\sum_{i=1,i\neq k}^{2\ell}\frac{a_i}{a_k-a_i}\sum_{\kappa,j=0}^{m+n-1}q_{\kappa,k}
p_{\kappa,i}q_{j,i}p_{j,k},
\end{multline}
and they are in involution so that
\begin{equation}\label{eq:involution}
\{H_i, H_j\}=0, \qquad 1\leq i,j \leq 2\ell.
\end{equation}
Furthermore, the equations \eqref{eq:xjkpi}--\eqref{eq:vjpartial} can be written as
\begin{equation}\label{eq:pqpartialder}
\frac{\partial q_{j,k}}{\partial a_i}=\{q_{j,k},H_i\},\qquad \frac{\partial p_{j,k}}{\partial a_i}=\{p_{j,k},H_i\},
\end{equation}
and
\begin{equation}\label{eq:uvpartialder}
\frac{\partial u_j}{\partial a_i}=\{u_j,H_i\}, \qquad \frac{\partial v_j}{\partial a_i}=\{v_j,H_i\}.
\end{equation}
\end{prop}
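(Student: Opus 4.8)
The plan is to verify the three assertions in turn — the explicit formula \eqref{eq:Hkexplicit} for $H_k$, the involution property \eqref{eq:involution}, and the Hamiltonian form \eqref{eq:pqpartialder}--\eqref{eq:uvpartialder} of the flow — starting from the classical formula for the logarithmic derivative of a Fredholm determinant. The first step is to recall that for an integral operator of the form $\widetilde{\mathcal{K}}_{\alpha,\frac{m}{n}}|_{\widetilde J}$ acting on a union of intervals with endpoints $a_k$, one has
\[
\frac{\partial}{\partial a_k}\log\det\bigl(I-\widetilde{\mathcal{K}}_{\alpha,\frac{m}{n}}\bigr)
=(-1)^{k+1}R(a_k,a_k),
\]
where $R(x,y)$ is the kernel of the resolvent $\widetilde{\mathcal{K}}_{\alpha,\frac{m}{n}}(I-\widetilde{\mathcal{K}}_{\alpha,\frac{m}{n}})^{-1}$; here the sign $(-1)^{k+1}$ reflects whether $a_k$ is a left or right endpoint. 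Because the kernel is of integrable (IIKS) type with the explicit numerator $\sum_i\phi_i(x)\psi_i(y)$ of \eqref{def:tildeker}, the diagonal of the resolvent can be written in closed form in terms of the quantities $x_{j,k}=(I-\widetilde{\mathcal{K}})^{-1}\phi_j(a_k)$ and $y_{j,k}=(I-\widetilde{\mathcal{K}}')^{-1}\psi_j(a_k)$; this is the standard Tracy--Widom computation (as in \cite{TW94,TW94a} for Airy/Bessel and \cite{Stro14} for Meijer G). Multiplying by $a_k$ and substituting the variables \eqref{eq:xytopq} should produce exactly the right-hand side of \eqref{eq:Hkexplicit}. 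The one genuine input needed beyond bookkeeping is the action of $\Delta_x$ on the Meijer G-functions: the functions $\phi_i,\psi_i$ are built from $(\Delta)^i$ applied to solutions of a fixed order-$(m+n)$ ODE whose characteristic data are the $\nu_i$ of \eqref{def:nui}, so $\Delta_x\phi_i$ and $\Delta_y\psi_i$ close up linearly on the $\phi$'s and $\psi$'s with coefficients the $b_i$ of \eqref{eq:elementary}; recording these recursions (and the shift by $b_{j-1}$, $(-1)^{n+1}a_k$, $(-1)^n a_k$ coming from the top/bottom index in each family) is what pins down the coefficients of the $q_{0,k}$, $q_{j,k}p_{m+n-1,k}$, and $q_{j+1,k}p_{j,k}$ terms in $H_k$.

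For \eqref{eq:pqpartialder}--\eqref{eq:uvpartialder}, the route is purely mechanical once \eqref{eq:Hkexplicit} is in hand: one computes $\{q_{j,k},H_i\}$ and $\{p_{j,k},H_i\}$ from the given Poisson bracket (noting the $1/a_k$ weight on the $q$--$p$ part and the $(-1)^n$ weight on the $u$--$v$ part) and checks term by term that the result reproduces \eqref{eq:xjkpi}--\eqref{eq:yjkpk2} after the change of variables \eqref{eq:xytopq}; the off-diagonal sum $\sum_{i\neq k}\frac{a_i}{a_k-a_i}(\cdots)$ in $H_k$ is responsible for the $\partial x_{j,k}/\partial a_i$ and $\partial y_{j,k}/\partial a_i$ equations with $i\neq k$, while the remaining "diagonal" terms produce the $a_k\,\partial/\partial a_k$ equations; the alternating signs $(-1)^i$ in Proposition \ref{prop:PDEforxy} are exactly the factors $\imath^2=-1$ arising whenever an even-indexed endpoint is involved. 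Similarly $\{u_j,H_i\}$ and $\{v_j,H_i\}$ match \eqref{eq:ujpartial}--\eqref{eq:vjpartial}. This step requires care but no new idea.

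The main obstacle is the involution \eqref{eq:involution}, $\{H_i,H_j\}=0$. The cleanest way I would argue it is to invoke the compatibility (zero-curvature) of the system in Proposition \ref{prop:PDEforxy}: since $H_i$ generates the flow $\partial/\partial a_i$ on all the canonical variables via \eqref{eq:pqpartialder}--\eqref{eq:uvpartialder}, and since these flows commute — because they all arise as deformation equations for one and the same resolvent $(I-\widetilde{\mathcal{K}})^{-1}$, so that mixed partials $\partial^2/\partial a_i\partial a_j$ of every $x_{j,k},y_{j,k},u_j,v_j$ are automatically symmetric — one gets $\partial H_j/\partial a_i=\partial H_i/\partial a_j$ and hence $\{H_i,H_j\}$ is independent of all the $a$'s; evaluating in a degenerate limit (e.g. shrinking one interval to a point, where $\widetilde{\mathcal{K}}\to 0$ and all $x_{j,k},y_{j,k}\to\phi_j(a_k),\psi_j(a_k)$) then forces the constant to be zero. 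Alternatively, and perhaps more robustly, I would verify $\partial H_i/\partial a_j = \partial H_j/\partial a_i$ directly by differentiating \eqref{def:Hk} and using $\partial_{a_j}\log\det(I-\widetilde{\mathcal K}) = H_j/a_j$ together with \eqref{eq:pqpartialder}, then observe that $\{H_i,H_j\}=a_i\,\partial_{a_i}(H_j/a_j)\cdot a_j - \text{(sym)}$ collapses by this symmetry — this is precisely the mechanism used by Tracy--Widom and by Strahov, and the algebra, while lengthy, is of the same flavor as the $H_k$ computation itself.
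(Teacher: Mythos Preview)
Your proposal is correct and follows essentially the same route as the paper. The paper computes $\partial_{a_k}\log\det(I-\widetilde{\mathcal K})=(-1)^{k+1}R_{\alpha,\frac{m}{n}}(a_k,a_k)$, writes the diagonal as $\sum_j\mathcal Q_j'(a_k)\mathcal P_j(a_k)$, and then feeds in the already-established PDEs for $\mathcal Q_j$ (Proposition~\ref{prop:pdeforQ}) to obtain \eqref{eq:Hkexplicit}; for \eqref{eq:involution}--\eqref{eq:uvpartialder} it simply says these ``follow from straightforward calculations'' using Proposition~\ref{prop:PDEforxy}, \eqref{eq:xytopq} and \eqref{eq:Hkexplicit}, which is exactly your mechanical-verification plan. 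The only caveat is your first sketch for the involution: the step ``$\partial H_j/\partial a_i=\partial H_i/\partial a_j$ hence $\{H_i,H_j\}$ is independent of all the $a$'s'' is not correct as stated (symmetry of mixed partials of $\log\det$ gives $a_i\partial_{a_i}H_j=a_j\partial_{a_j}H_i$, and the Hamiltonians carry explicit $a$-dependence, so passing from commuting flows to constancy of the bracket needs more care); your second route, a direct bracket computation in the Tracy--Widom/Strahov style, is the one that actually goes through and is what the paper has in mind.
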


The Hamiltonian interpretation particularly implies the complete integrability of the partial differential equations, as explained in \cite{TW93} for the sine kernel.

\subsection{Gap probability on a single interval $(0,s)$}
Our final result is concerned with the Fredholom determinant over a single interval, i.e., the probability of finding no particles on this interval.
\begin{thm}\label{prop:oneinterval}
For $s>0$, let
\begin{equation}\label{def:Fs}
F_{\alpha,\frac{m}{n}}(s):=\det \left(I- \mathcal{K}_{\alpha,\frac{m}{n}}\Big{|}_{(0,s)} \right ).
\end{equation}
Then, we have
\begin{align}\label{eq:Gapformular}
F_{\alpha,\frac{m}{n}}(s)
&=\exp\left(\frac{(-1)^n }{m^{m-2}n^n}\int_0^s\log\left(\frac{s}{\tau}\right)\tau^{m-1}x_0\left(\frac{\tau^m}{m^mn^n}\right)y_{m+n-1}\left(\frac{\tau^m}{m^mn^n}\right)\ud \tau\right)
\nonumber \\
&=\exp\left(m\int_0^s\frac{v_0\left(\frac{\tau^m}{m^mn^n}\right)}{\tau}\ud \tau \right),
\end{align}
where the functions $x_0$, $y_{m+n-1}$ and $v_0$ are solutions of the following system of $4(m+n)$ nonlinear ordinary differential equations:
\begin{align}
s\frac{\ud x_{j}(s)}{\ud s}&=-v_j(s)x_{0}(s)-x_{j+1}(s),\qquad 0 \leq j \leq m+n-2, \label{eq:eqforx}\\
s\frac{\ud x_{m+n-1}(s)}{\ud s}&=((-1)^{n+1}s-v_{m+n-1}(s))x_{0}(s)+\sum_{i=0}^{m+n-1}u_i(s)x_{i}(s),
\\
s\frac{\ud y_{j}(s)}{\ud s}&=y_{j-1}(s)-u_j(s)y_{m+n-1}(s), \qquad 1 \leq j \leq m+n-1,
\\
s\frac{\ud y_{0}(s)}{\ud s}&=\sum_{i=0}^{m+n-1}v_i(s) y_{i}(s)+((-1)^{n}s-u_{0}(s))y_{m+n-1}(s),
\\
\frac{\ud u_j(s)}{\ud s}&=(-1)^{n}x_{0}(s)y_{j}(s), \qquad 0 \leq j \leq m+n-1,
\\
\frac{\ud v_j(s)}{\ud s}&=(-1)^{n}x_{j}(s)y_{m+n-1}(s), \qquad 0 \leq j \leq m+n-1, \label{eq:eqforv}
\end{align}
with initial conditions
\begin{align}
& x_{j}(s)\sim \phi_j(s), \quad y_{j}(s)\sim \psi_j(s), \quad s \to 0, \label{eq:xyssmall}
\\
& u_{j}(0)=b_{j-1},
\quad
v_{j}(0)=0, \quad  0 \leq j \leq m+n-1.\label{eq:ICuv}
\end{align}

\end{thm}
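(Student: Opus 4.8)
The plan is to specialize the multi-interval results of Propositions~\ref{prop:PDEforxy} and~\ref{prop:Hamiltonian} to the case $\ell=1$, $a_1=0$, $a_2=s$, and then extract a closed evolution in the single variable $s$. The first step is to observe that on the interval $(0,s)$ the kernel $\widetilde K^{(\alpha,m/n)}$ is analytic at the left endpoint $a_1=0$ (this is where the hypothesis $\alpha>m-1-\frac mn$, guaranteeing that all the Meijer $G$-functions and their $\Delta_x$-derivatives appearing in $\phi_i,\psi_i$ remain bounded near the origin, is used), so that the boundary contributions $x_{j,1}$, $y_{j,1}$ at $a_1=0$ either vanish or are harmless; concretely, one checks from \eqref{def:phi}--\eqref{def:psi} and the series for the Meijer $G$-functions that the only surviving endpoint data are at $a_2=s$. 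Writing $x_j(s):=x_{j,2}$, $y_j(s):=y_{j,2}$, and keeping $u_j$, $v_j$ as in \eqref{def:ujk}--\eqref{def:vjk}, the sums over $i\neq k$ in \eqref{eq:xjkpk1}--\eqref{eq:yjkpk2} collapse, and equations \eqref{eq:eqforx}--\eqref{eq:eqforv} follow directly from \eqref{eq:xjkpk1}, \eqref{eq:xjkpk2}, \eqref{eq:yjkpk1}, \eqref{eq:yjkpk2}, \eqref{eq:ujpartial}, \eqref{eq:vjpartial} with $k=2$, $(-1)^{n+k}=(-1)^n$. This reduces the PDE system to the stated ODE system.

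The second step is the initial conditions. As $s\to 0$ the interval shrinks, the operator $\widetilde{\mathcal K}_{\alpha,m/n}\big|_{(0,s)}\to 0$ in trace norm, hence $(I-\widetilde{\mathcal K})^{-1}\to I$ and $(I-\widetilde{\mathcal K}')^{-1}\to I$; plugging into \eqref{def:xy} gives $x_j(s)=x_{j,2}\to \phi_j(0^+)$, more precisely $x_j(s)\sim\phi_j(s)$, and likewise $y_j(s)\sim\psi_j(s)$, which is \eqref{eq:xyssmall}. For \eqref{def:ujk}--\eqref{def:vjk} the integrals over $\widetilde J=(0,s)$ tend to $0$, so $u_j(0)=b_{j-1}$ and $v_j(0)=0$, giving \eqref{eq:ICuv}. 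One should check that the ODE system together with these asymptotic/initial conditions determines the solution uniquely; since the system is singular at $s=0$ (the $s\,\frac{d}{ds}$ on the left), this is a mild Frobenius-type uniqueness argument using the indicial structure coming from the $\nu_i$.

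The third step is the formula \eqref{eq:Gapformular} for $F_{\alpha,m/n}(s)$ itself. Starting from \eqref{eq:relKtoKtilde}, $F_{\alpha,m/n}(s)=\det(I-\widetilde{\mathcal K}_{\alpha,m/n}\big|_{(0,s^m/(m^mn^n))})$ after the change of variables $x\mapsto mn^{n/m}x^{1/m}$, i.e. the scaled endpoint is $\sigma:=s^m/(m^mn^n)$. By \eqref{def:Hk} with $\ell=1$, $a_2=\sigma$, one has $\sigma\,\frac{d}{d\sigma}\log\det(I-\widetilde{\mathcal K}\big|_{(0,\sigma)})=H_2$. The key algebraic identity is that the Hamiltonian $H_2$ from \eqref{eq:Hkexplicit}, evaluated at $\ell=1$ with the only other endpoint at $a_1=0$ contributing nothing, simplifies — after using \eqref{eq:xytopq} and the ODEs — to a multiple of $x_0(\sigma)y_{m+n-1}(\sigma)$, or equivalently (via \eqref{eq:vjpartial} with $j=0$, $k=2$, which says $v_0'(\sigma)=(-1)^n x_0(\sigma)y_{m+n-1}(\sigma)$) to $\sigma$ times a total derivative. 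Tracking the constants $m^mn^n$ from the scaling and the factor $x^{m-1}$/$m^mn^{n-1}$ in \eqref{eq:integrablerepr}, one integrates $\frac{d}{d\sigma}\log F=\sigma^{-1}H_2$ from $0$ to $\sigma$; the first line of \eqref{eq:Gapformular} is the "$\sigma^{-1}\,\times\,$(total derivative integrated by parts against $\log$)" form, and the second line is the direct antiderivative using $H_2\propto \sigma\,v_0'(\sigma)$, after substituting back $\sigma=\tau^m/(m^mn^n)$ and changing variables to $\tau$ (which produces the factor $m$ and the $1/\tau$).

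I expect the main obstacle to be the third step: correctly reducing the generic Hamiltonian $H_k$ in \eqref{eq:Hkexplicit} to the clean expression $H_2 = (-1)^{n}m\,v_0\big(\frac{s^m}{m^mn^n}\big)\cdot\text{(scaling const)}$ — or rather to $\sigma v_0'(\sigma)$ up to the explicit constant $m^{2-m}n^{-n}$ — and then matching \emph{all} the powers of $m$ and $n$ and the $\tau^{m-1}$, $\log(s/\tau)$ weights so that the two lines of \eqref{eq:Gapformular} agree. This requires care with: (i) the $q_{0,k}$-coefficient in $H_k$ combining with $\sum u_j q_{j,k}p_{m+n-1,k}$ and $-\sum q_{j+1,k}p_{j,k}$; (ii) showing these "bulk" terms actually telescope against $\frac{d}{ds}(\text{something})$ when $\ell=1$; and (iii) the bookkeeping between the unscaled variable $s$ (on which $F_{\alpha,m/n}$ depends) and the scaled variable $\sigma=s^m/(m^mn^n)$ (on which $x_j,y_j,u_j,v_j$ are most naturally functions), which is the source of the chain-rule factor $m$ in the exponent. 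The equivalence of the two displayed expressions for $F_{\alpha,m/n}(s)$ is then an integration by parts: differentiating $m\int_0^s v_0(\tau^m/(m^mn^n))\,\tau^{-1}d\tau$ in $s$ and using $v_0'=(-1)^n x_0 y_{m+n-1}$ reproduces the integrand of the first line after one integration by parts in $\tau$ against $\log(s/\tau)$.
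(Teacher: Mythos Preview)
Your reduction of the multi-interval PDE system to the single-interval ODE system, and your treatment of the initial conditions, match the paper's argument essentially verbatim. One small point: you propose a Frobenius-type uniqueness argument at $s=0$, but the paper explicitly declines to claim uniqueness (see the sentence immediately following the statement of the theorem), so this is not part of what needs to be proved.

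For the gap formula \eqref{eq:Gapformular}, your approach diverges from the paper's, and your key claim is misstated. You assert that $H_2$, once the cross-terms drop out, ``simplifies to a multiple of $x_0(\sigma)y_{m+n-1}(\sigma)$.'' It does not: with $a_1=0$, $a_2=\sigma$ one has $H_2=-\sigma R_{\alpha,m/n}(\sigma,\sigma)$, and the algebraic expression \eqref{eq:Hkexplicit} for $H_2$ is a full bilinear form in the $x_j,y_j,u_j,v_j$ that is \emph{not} proportional to $x_0y_{m+n-1}$. What is true is that $H_2(\sigma)=v_0(\sigma)$, equivalently $\frac{d}{d\sigma}H_2=(-1)^n x_0y_{m+n-1}$; but to extract this from the Hamiltonian expression alone you would have to differentiate \eqref{eq:Hkexplicit} in $\sigma$, substitute all of the ODEs, and watch a large telescoping cancellation --- exactly the obstacle you flag in your last paragraph.

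The paper bypasses this entirely. Instead of working with $H_2$, it uses the resolvent identity (Proposition~\ref{prop:pdeforR}, equation \eqref{eq:partialR2}):
\[
\Bigl(\Delta_x+\Delta_y+\sum_k\Delta_{a_k}+I\Bigr)R_{\alpha,m/n}(x,y)=(-1)^{n+1}\mathcal Q_0(x)\mathcal P_{m+n-1}(y)\chi_{\widetilde J}(y).
\]
Specializing to $\widetilde J=(0,\sigma)$ and evaluating on the diagonal at $x=y=\sigma$ (so that $\Delta_{a_1}=0\cdot\partial_{a_1}$ vanishes and the three remaining Euler operators combine by the chain rule into the total $\sigma$-derivative), this reads
\[
\frac{d}{d\sigma}\bigl(\sigma R(\sigma)\bigr)=(-1)^{n+1}x_0(\sigma)y_{m+n-1}(\sigma),
\]
with $R(\sigma):=R_{\alpha,m/n}(\sigma,\sigma)$. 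Since \eqref{eq:partiallogdet} gives $\frac{d}{d\sigma}\log\det(I-\widetilde{\mathcal K}|_{(0,\sigma)})=-R(\sigma)$, one integrates twice (Fubini, then the $\log(s/\tau)$ integration by parts you describe) to obtain both lines of \eqref{eq:Gapformular} after the change of variables $\sigma=\tau^m/(m^mn^n)$. This route avoids altogether the ``telescoping'' of the Hamiltonian that you correctly identified as the main obstacle.
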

The initial conditions \eqref{eq:xyssmall}--\eqref{eq:ICuv} follow directly from their definitions, but it is not clear whether the solutions with this behaviour are unique or not. We also provide some plots of the determinants in Figure \ref{fig:Fredholm}, using the strategy in \cite{Born10}.

\begin{figure}[t]
  \centering
  \begin{overpic}[scale=.8]{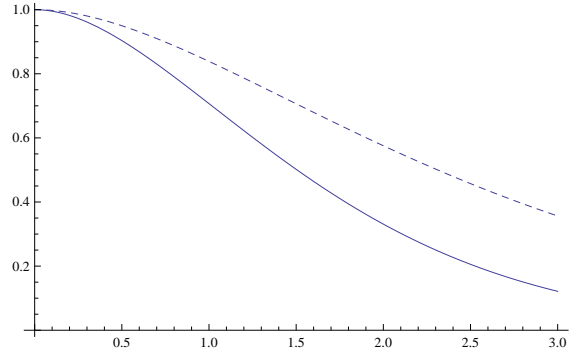}
  \end{overpic}
  \caption{Numerical computation of $F_{\alpha,\theta}(s)$ for $\theta=1,\alpha=1$ (dashed) and $\theta=2,\alpha=1$ (solid).}
  \label{fig:Fredholm}
\end{figure}

Since $\mathcal{K}_{\alpha,\frac{m}{n}}$ is equivalent to the Bessel kernel if $\theta=1$,  the system of ordinary differential equations should be related to a particular Painlev\'{e} III system as shown in \cite{TW94}; see also \cite{NF16} for an explanation. In general, it would be very interesting to see whether one could reduce a single differential equation satisfied by $v_0$ from \eqref{eq:eqforx}--\eqref{eq:eqforv}. This will in turn provide a higher order generalization of the third Painlev\'{e} equation, and be helpful to establish the large $s$ asymptotics of $F_{\alpha,\frac{m}{n}}$. The study of this aspect has been conducted in \cite{NF16} for the case that $\theta=2$, based on the results in \cite{Stro14}, where an explicit 4-th order differential equation has been derived and its relation to the so-called 4-dimensional Painlev\'{e}-type equation is discussed.

\begin{remark}
It was pointed out by one of the referees that the small $s$ asymptotics of $F_{\alpha,\theta}(s)$ can be derived quickly from the double contour integral representation of Wright's generalized Bessel kernel given in \eqref{eq:BorHard} for general $\theta$. Indeed, by \eqref{eq:relKtoKtilde} and \eqref{eq:BorHard}, it follows that, as $s\to 0$,
\begin{align}\label{eq:smalls1}
\log F_{\alpha,\theta}(s) &\sim - \int_0^s K^{(\alpha,\theta)}(x,x)\ud x \nonumber
\\
&=\frac{-\theta }{(2\pi i)^2} \int_{c-i\infty}^{c+i\infty} \ud \varrho \oint_{\Sigma}  \ud t
   \frac{\Gamma(\varrho+1)\Gamma(\alpha+1+\theta \varrho )}{\Gamma(t+1)\Gamma(\alpha+1+\theta t)}
            \frac{\sin\pi \varrho}{\sin \pi t}      \frac{ \int_{0}^s x^{\theta(t- \varrho) - 1}\ud x}{\varrho-t}
\nonumber
\\
&=\frac{1}{(2\pi i)^2} \int_{c-i\infty}^{c+i\infty} \ud \varrho \oint_{\Sigma}  \ud t
   \frac{\Gamma(\varrho+1)\Gamma(\alpha+1+\theta \varrho )}{\Gamma(t+1)\Gamma(\alpha+1+\theta t)}
            \frac{\sin\pi \varrho}{\sin \pi t}      \frac{s^{\theta(t-\varrho)}}{(\varrho-t)^2}.
\end{align}
To this end, we note that the integrand in \eqref{eq:smalls1} has poles in $t$ (due to $\sin(\pi t)$) at $0,1,2,\ldots$, and in $\varrho$ (due to $\Gamma(\alpha+1+\theta \varrho)$) at $-\frac{\alpha+1}{\theta}-\theta k$, $k=0,1,2,\ldots$. Since $s$ is small, the main contribution to the integral comes from the residues in $t$ and $\varrho$ that are closest to each other, which corresponds to $s=-\frac{\alpha+1}{\theta}$ and $t=0$. A straightforward calculation with the help of residue theorem then gives us that
\begin{align}\label{eq:smalls}
\log F_{\alpha,\theta}(s) &\sim
-\frac{\theta }{\pi} \frac{\Gamma\left(1-\frac{\alpha+1}{\theta}\right)}{\Gamma(\alpha+1)}
\sin\left(\frac{\pi(\alpha+1)}{\theta}\right)\frac{s^{\alpha+1}}{(\alpha+1)^2}
\nonumber
\\
&=-\frac{\theta }{\Gamma\left(\frac{\alpha+1}{\theta}\right)\Gamma(\alpha+1)} \frac{s^{\alpha+1}}{(\alpha+1)^2},\qquad s\to 0,
\end{align}
where we have made use of the Euler's reflection formula for the gamma function in the last step.
\end{remark}

\subsection{Organization of the rest of the paper}
The rest of this paper is organized as follows. The proof of Theorem \ref{prop:integrable} is given in Section \ref{sec:props}, following the strategy in \cite{Bertola-Gekhtman-Szmigielski14,Kuijlaars-Zhang14}. The essential issue is that, if $b$ is rational, the Wright's generalized Bessel function $J_{a,b}$ satisfies an ordinary differential equation and  is related to the Meijer G-function. The proofs of Propositions \ref{prop:PDEforxy}--\ref{prop:Hamiltonian} and Theorem \ref{prop:oneinterval} are presented in Section \ref{Sec:proofofPDE}, where we start with a review of some basic facts about the operators. Since the Wright's generalized Bessel kernel is not symmetric in general, we have to introduce extra quantities $y_{j,k}$ in \eqref{def:xy}, and the argument follows from Strahov's extension of Tracy-Widom theory \cite{Stro14}, dealing with the Meijer G-kernel. Although these two kernels are related to each other for special cases, we emphasize that they are not equivalent in general. For convenience of the readers, we include some basic properties of the Meijer G-function and the Wright's generalized Bessel function used throughout this paper in the Appendix.

\section{Proof of Theorem \ref{prop:integrable}}\label{sec:props}
We set
\begin{align}\label{def:pandq}
p^{(\alpha,\theta)}(x)=x^{\alpha}J_{\frac{\alpha+1}{\theta},\frac{1}{\theta}}(x),
\qquad
q^{(\alpha,\theta)}(y)=\theta J_{\alpha+1,\theta}(y^{\theta}).
\end{align}
By \eqref{eq:BorHard}, our task is then to evaluate the integral
\begin{equation} \label{eq:Knuintegral}
K^{(\alpha,\theta)}(x,y) = \int_0^1 p^{(\alpha,\theta)}(ux)q^{(\alpha,\theta)}(uy)\ud u,
\end{equation}
under the condition that $\theta=m/n\in\mathbb{Q}$, where $m,n\in\mathbb{N}$ and $(m,n)=1$.

In view of \eqref{eq:WrightinMeijer}, it is readily seen that
\begin{equation}\label{eq:qinMeijer}
q^{(\alpha,\frac{m}{n})}(y)=(2\pi)^{\frac{m-n}{2}}m^{-\alpha+\frac{1}{2}}n^{-\frac{1}{2}}G^{n,0}_{0,m+n}\left({- \atop \nu_0, \nu_1, \ldots, \nu_{m+n-1} }\Big{|}
\frac{y^m}{m^mn^n}\right),
\end{equation}
where $\nu_0=0$ and $\nu_i$, $i\geq 1$ is given in \eqref{def:nui}.
Similarly, for $p^{(\alpha,\frac{m}{n})}(x)$, it is readily seen from $\eqref{eq:WrightinMeijer}$ that
\begin{align*}
&p^{(\alpha,\frac{m}{n})}(m n^{\frac{n}{m}}x^{\frac{1}{m}})
\\
&=(2\pi)^{\frac{n-m}{2}}m^{\frac{1}{2}+\alpha}n^{\frac{1}{2}-\frac{n}{m}}x^{\frac{\alpha}{m}}
G^{m,0}_{0,m+n}\left({- \atop \frac{m-1}{m}, \ldots, 0, 1-\frac{\alpha+1}{m}-\frac{n-1}{n},\ldots,1-\frac{\alpha+1}{m}}\Big{|}
x\right)
\\
&=(2\pi)^{\frac{n-m}{2}}m^{\frac{1}{2}+\alpha}n^{\frac{1}{2}-\frac{n}{m}}
G^{m,0}_{0,m+n}\left({- \atop 1-\frac{1}{m}-\nu_{m+n-1}, \ldots, 1-\frac{1}{m}-\nu_0} \Big{|}
x\right),
\end{align*}
where in the last step we have made use of \eqref{eq:multiply}. Hence,
\begin{align}\label{eq:pinMeijer}
p^{(\alpha,\frac{m}{n})}(x)
=(2\pi)^{\frac{n-m}{2}}m^{\frac{1}{2}+\alpha}n^{\frac{1}{2}-\frac{n}{m}}
G^{m,0}_{0,m+n}\left({- \atop 1-\frac{1}{m}-\nu_{m+n-1}, \ldots, 1-\frac{1}{m}-\nu_0} \Big{|}
\frac{x^m}{m^mn^n}\right).
\end{align}

Substituting \eqref{eq:pinMeijer} and \eqref{eq:qinMeijer} into \eqref{eq:Knuintegral}, we obtain from \eqref{eq:multiply} that
\begin{align}
& K^{(\alpha,\frac{m}{n})}(x,y) \nonumber
\\
&= m n^{-\frac{n}{m}}\int_0^1 G^{m,0}_{0,m+n}\left({- \atop 1-\frac{1}{m}-\nu_{m+n-1}, \ldots, 1-\frac{1}{m}-\nu_1, 1-\frac{1}{m}-\nu_0} \Big{|}
\frac{u^mx^m}{m^mn^n}\right)
\nonumber
\\& \qquad \qquad \qquad \qquad \qquad \times G^{n,0}_{0,m+n}\left({- \atop \nu_0, \nu_1, \ldots, \nu_{m+n-1} }\Big{|}
\frac{u^my^m}{m^mn^n}\right)\ud u
\nonumber
\\&= m n^{-\frac{n}{m}}\int_0^1 \left(\frac{u^mx^m}{m^mn^n}\right)^{1-\frac{1}{m}}G^{m,0}_{0,m+n}\left({- \atop -\nu_{m+n-1}, \ldots, -\nu_1, \nu_0} \Big{|}\frac{u^mx^m}{m^mn^n}\right)
\nonumber
\\& \qquad \qquad \qquad \qquad \qquad \times G^{n,0}_{0,m+n}\left({- \atop \nu_0, \nu_1, \ldots, \nu_{m+n-1} }\Big{|}
\frac{u^my^m}{m^mn^n}\right)\ud u \nonumber
\\&=\frac{m^{1-m}}{n^n} x^{m-1}\int_0^1 G^{m,0}_{0,m+n}\left({- \atop -\nu_{m+n-1}, \ldots, -\nu_1, \nu_0} \Big{|}\frac{t x^m}{m^mn^n}\right)
\nonumber
\\& \qquad \qquad \qquad \qquad \qquad \times G^{n,0}_{0,m+n}\left({- \atop \nu_0, \nu_1, \ldots, \nu_{m+n-1} }\Big{|}
\frac{t y^m}{m^mn^n}\right)\ud t \qquad (t=u^m)
\nonumber
\\&=x^{m-1}\int_0^1f(t^{\frac{1}{m}}x)g(t^{\frac{1}{m}}y)\ud t, \label{eq:kinMeij}
\end{align}
where
\begin{multline}\label{def:f}
f(x)= (2\pi)^{\frac{n-m}{2}}m^{\frac{1}{2}+\alpha-m}n^{\frac{1}{2}-n}
G^{m,0}_{0,m+n}\left({- \atop -\nu_{m+n-1}, \ldots, -\nu_1, \nu_0} \Big{|}
\frac{x^m}{m^mn^n}\right)
\\=\frac{x^{1-m}}{m}p^{(\alpha,\frac{m}{n})}(x)=\frac{x^{\alpha+1-m}}{m}J_{\frac{(\alpha+1)n}{m},\frac{n}{m}}(x),
\end{multline}
and
\begin{multline}\label{def:g}
g(y)=(2\pi)^{\frac{m-n}{2}}m^{-\alpha+\frac{1}{2}}n^{-\frac{1}{2}}G^{n,0}_{0,m+n}\left({- \atop \nu_0, \nu_1, \ldots, \nu_{m+n-1} }\Big{|}
\frac{y^m}{m^mn^n}\right)\\
=q^{(\alpha,\frac{m}{n})}(y)=\frac{m}{n}J_{\alpha+1,\frac{m}{n}}(y^{\frac{m}{n}}).
\end{multline}

Note that the Meijer-G function satisfies the differential equation \eqref{eq:diff}.
For $f$ and $g$ defined \eqref{def:f} and \eqref{def:g}, it is readily seen that, for every $t$,
\begin{align}
g(t^{\frac{1}{m}}y) \prod_{j=0}^{m+n-1}\left(\frac{\Delta_x}{m}+\nu_j\right)f(t^\frac{1}{m}x)&=(-1)^m \frac{tx^m}{m^mn^n} f(t^{\frac{1}{m}}x)g(t^{\frac{1}{m}}y), \label{eq:gDf}
\\
f(t^{\frac{1}{m}}x) \prod_{j=0}^{m+n-1}\left(\frac{\Delta_y}{m}-\nu_j \right)g(t^\frac{1}{m}y)&=(-1)^n \frac{ty^m}{m^mn^n} g(t^{\frac{1}{m}}y)f(t^{\frac{1}{m}}x),\label{eq:fDg}
\end{align}
where $\Delta_x=x\frac{\ud}{\ud x}$ and $\Delta_y=y\frac{\ud}{\ud y}$.
If $(-1)^n=(-1)^m=-1$, we subtract these two identities, otherwise
we add them together. Since the arguments in the other cases are similar,
we restrict to the case that $(-1)^n=(-1)^m=-1$.

Subtracting \eqref{eq:gDf} from \eqref{eq:fDg} we obtain
\begin{align}
&\frac{x^m-y^m}{m^mn^n}f(t^{\frac{1}{m}}x)g(t^{\frac{1}{m}}y) \nonumber
\\
&=\frac{1}{t}\left(f(t^{\frac{1}{m}}x)\prod_{j=0}^{m+n-1}\left(\frac{\Delta_y}{m}-\nu_j\right)g(t^{\frac{1}{m}}y)-
g(t^{\frac{1}{m}}y)\prod_{j=0}^{m+n-1}\left(\frac{\Delta_x}{m}+\nu_j\right)f(t^{\frac{1}{m}}x)\right)
\nonumber
\\
&=\frac{1}{t}\sum_{i=0}^{m+n-1}
\frac{b_i}{m^{1+i}}\left(f(t^{\frac{1}{m}}x)(\Delta_y)^{i+1}g(t^{\frac{1}{m}}y)+(-1)^ig(t^{\frac{1}{m}}y)(\Delta_x)^{i+1}
f(t^{\frac{1}{m}}x)\right),
\label{eq:difference}
\end{align}
where the constants $b_i$ are defined in \eqref{def:ai} and
\eqref{eq:elementary}. To this end, we observe that
\begin{multline*}
\frac{\partial}{\partial t}\left(\sum_{j=0}^{i}(-1)^{j}
\left(\Delta_x\right)^jf(t^{\frac{1}{m}}x)\left(\Delta_y\right)^{i-j}g(t^{\frac{1}{m}}y)\right) \\
    = \frac{1}{m t} \left(f(t^{\frac{1}{m}}x)(\Delta_y)^{i+1}g(t^{\frac{1}{m}}y)+(-1)^ig(t^{\frac{1}{m}}y)(\Delta_x)^{i+1}
f(t^{\frac{1}{m}}x)\right),
\end{multline*}
thus,
\begin{align}
&\frac{x^m-y^m}{m^mn^n}f(t^{\frac{1}{m}}x)g(t^{\frac{1}{m}}y) \nonumber \\
&=\sum_{i=0}^{m+n-1}
\frac{b_i}{m^{i}}\frac{\partial}{\partial t}\left(\sum_{j=0}^{i}(-1)^{j}
\left(\Delta_x\right)^jf(t^{\frac{1}{m}}x)\left(\Delta_y\right)^{i-j}g(t^{\frac{1}{m}}y)\right)
\nonumber
\\
&=\frac{\partial}{\partial t} \sum_{j=0}^{m+n-1}(-1)^{j}
\left(\Delta_x\right)^j f(t^{\frac{1}{m}}x)\sum_{i=0}^{m+n-1-j}\frac{b_{i+j}}{m^{i+j}}
\left(\Delta_y\right)^{i}g(t^{\frac{1}{m}}y)= \frac{\partial}{\partial t} \mathcal B(f(t^{\frac{1}{m}}x), g(t^{\frac{1}{m}}y)).
\end{align}
This, together with \eqref{eq:kinMeij}, implies that
\[ \left(\frac{x^m-y^m}{m^mn^n}\right) K^{(\alpha,\frac{m}{n})}(x,y) = x^{m-1}\left( \mathcal B(f(x), g(y)) - \lim_{t \to 0+} \mathcal B(f(t^{\frac{1}{m}}x), g(t^{\frac{1}{m}}y))\right), \]
which gives us \eqref{eq:integrablerepr} provided we can show that
\begin{equation}  \label{show:limit}
    \lim_{t \to 0+} \mathcal B(f(t^{\frac{1}{m}}x), g(t^{\frac{1}{m}}y)) = 0.
    \end{equation}

To show \eqref{show:limit}, we first observe from \eqref{def:f}, \eqref{def:g} and \eqref{eq:Wright} that, as $t\to 0+$,
\begin{equation}
\left(\Delta_x\right)^j f(t^{\frac{1}{m}}x)= \left\{
                                               \begin{array}{ll}
                                                 \mathcal{O}(t^{\frac{\alpha+1-m}{m}}), & \hbox{$\alpha\neq m-1$, $j=0,1,2,\ldots$,} \\
                                                 \mathcal{O}(1), & \hbox{$\alpha= m-1$, $j=0$,} \\
                                                  \mathcal{O}(t^{\frac{1}{m}}), & \hbox{$\alpha = m-1$, $j=1,2,\ldots,$}
                                               \end{array}
                                             \right.
\end{equation}
and
\begin{equation}
\left(\Delta_y\right)^j g(t^{\frac{1}{m}}y)=\left\{
                                              \begin{array}{ll}
                                                \mathcal{O}(1), & \hbox{$j=0$,} \\
                                                \mathcal{O}(t^{\frac{1}{n}}), & \hbox{$j=1,2,\ldots.$}
                                              \end{array}
                                            \right.
 \end{equation}
The condition $\alpha> m-1-\frac{m}{n}$ now ensures that
$$\frac{\alpha+1-m}{m}+\frac{1}{n}>0.$$
Hence, the terms in $\mathcal{B}(f(t^{\frac{1}{m}}x), g(t^{\frac{1}{m}}y))$ containing $\left(\Delta_y\right)^j g(t^{\frac{1}{m}}y)$,
$j=1,2,\ldots,m+n-1$ will vanish as $t\to 0+$.

We next consider the term involving $ g(t^{\frac{1}{m}}y)$, which is given by
$$ \sum_{j=0}^{m+n-1}(-1)^{j}\frac{b_j}{m^j}
\left(\Delta_x\right)^j f(t^{\frac{1}{m}}x)g(t^{\frac{1}{m}}y).$$
In view of the differential equation \eqref{eq:gDf} satisfied by $f(t^{\frac{1}{m}}x)$, it is readily seen that
\begin{multline}\label{eq:zerobehaf}
-\frac{\Delta_x}{m}\left[\sum_{j=0}^{m+n-1}(-1)^{j}\frac{b_j}{m^j}
\left(\Delta_x\right)^j f(t^{\frac{1}{m}}x)\right]=\prod_{j=0}^{m+n-1}\left(\frac{\Delta_x}{m}+\nu_j\right)f(t^\frac{1}{m}x)\\
=(-1)^m \frac{tx^m}{m^mn^n} f(t^{\frac{1}{m}}x).
\end{multline}
Note that
$$f(t^{\frac{1}{m}}x)=\sum_{i=0}^\infty c_i t^{\frac{\alpha+1-m+i}{m}}x^{\alpha+1-m+i}$$
for some constants $c_i$ with $c_0=\Gamma(\frac{(\alpha+1)n}{m})/m$. This, together with \eqref{eq:zerobehaf}, implies that
$$\sum_{j=0}^{m+n-1}(-1)^{j}\frac{b_j}{m^j}
\left(\Delta_x\right)^j f(t^{\frac{1}{m}}x)=\mathcal{O}(t^{1+\frac{\alpha+1-m}{m}}) \to 0 $$
as $t\to 0+$, provided the summation (essentially $b_0 f(t^{\frac{1}{m}}x)$ ) does not contain any non-zero constant term, i.e., $\alpha\neq m-1$. If $\alpha=m-1$, however, we see from \eqref{def:nui} that $\nu_n=0$, which implies that
$$b_0=\prod_{i=1}^{m+n-1}(-\nu_i)=0,$$
on account of the definition of $b_0$ in \eqref{def:ai}.

To show \eqref{eq:integrablereprscale}, we observe from the third equality in \eqref{eq:kinMeij} that
\begin{align}\label{eq:Kscaling}
  K^{(\alpha,\frac{m}{n})}(mn^{\frac{n}{m}}x^{\frac{1}{m}},mn^{\frac{n}{m}}y^{\frac{1}{m}})
& =\frac{x^{1-\frac{1}{m}}}{n^{\frac{n}{m}}} \int_0^1 G^{m,0}_{0,m+n}\left({- \atop -\nu_{m+n-1}, \ldots, -\nu_1, \nu_0} \Big{|}tx\right)
\nonumber
\\& \qquad \qquad \times G^{n,0}_{0,m+n}\left({- \atop \nu_0, \nu_1, \ldots, \nu_{m+n-1} }\Big{|}
ty \right)\ud t.
\end{align}
The first equality can then be proved in a manner similar to that of \eqref{eq:integrablerepr}. The second equality follows from the relations between Meijer-G functions and Wright's generalized Bessel functions, as indicated in \eqref{def:f} and \eqref{def:g}.

This completes the proof of Theorem \ref{prop:integrable}.
\qed

\section{Proofs of Propositions \ref{prop:PDEforxy}--\ref{prop:Hamiltonian} and Theorem \ref{prop:oneinterval}}\label{Sec:proofofPDE}

\subsection{Preliminaries}
It is the aim of this section to fix the notations and to collect some basic facts about the operators
for later use.

Throughout Section \ref{Sec:proofofPDE}, we shall denote by
\begin{align}\label{def:tilfg}
\widetilde f (x) = G^{m,0}_{0,m+n}\left({- \atop -\nu_{m+n-1}, \ldots, -\nu_1, \nu_0} \Big{|}x\right), \quad
\widetilde g (y) = G^{n,0}_{0,m+n}\left({- \atop \nu_0, \nu_1, \ldots, \nu_{m+n-1} } \Big{|} y\right),
\end{align}
where $\nu_0=0$ and $\nu_i$, $i\geq 1$ is given in \eqref{def:nui}. Recall the function $\widetilde{K}^{(\alpha,\frac{m}{n})}(x,y)$ defined in \eqref{def:tildeker}, the following proposition is immediate.
\begin{prop}
We have
\begin{equation}\label{eq:tildeKint}
\widetilde{K}^{(\alpha,\frac{m}{n})}(x,y)=\int_0^1 \widetilde f(tx) \widetilde g(ty)\ud t.
\end{equation}
Furthermore, the functions $\widetilde f$ and $\widetilde g$ satisfy the differential equations
\begin{align}
\prod_{j=0}^{m+n-1}\left(\Delta_x+\nu_j\right)\widetilde f(x)&=(-1)^m x \widetilde f(x), \label{eq:eqftilde}
\\
\prod_{j=0}^{m+n-1}\left(\Delta_y-\nu_j\right)\widetilde g(y)&=(-1)^n y \widetilde g(y),\label{eq:eqgtilde}
\end{align}
respectively.
\end{prop}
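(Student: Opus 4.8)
The plan is to derive both assertions directly from the computations already carried out in the proof of Theorem~\ref{prop:integrable}, together with the classical differential equation of the Meijer $G$-function. This is the ``immediate'' proposition announced in the text, so the work is essentially bookkeeping.

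\emph{The integral representation \eqref{eq:tildeKint}.} I would compare the two expressions for the scaled kernel obtained in Section~\ref{sec:props}. On one hand, \eqref{eq:Kscaling} states that $K^{(\alpha,\frac{m}{n})}(mn^{\frac{n}{m}}x^{\frac{1}{m}},mn^{\frac{n}{m}}y^{\frac{1}{m}})=\frac{x^{1-1/m}}{n^{n/m}}\int_0^1 \widetilde f(tx)\widetilde g(ty)\ud t$, with $\widetilde f,\widetilde g$ the Meijer $G$-functions of \eqref{def:tilfg}. On the other hand, the first equality in \eqref{eq:integrablereprscale} identifies the same quantity with $\frac{x^{1-1/m}}{n^{n/m}}\cdot\widetilde{\mathcal{B}}(\widetilde f(x),\widetilde g(y))/(x-y)$. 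Cancelling the common prefactor $x^{1-1/m}/n^{n/m}$ and recalling that $\widetilde{\mathcal{B}}(\widetilde f(x),\widetilde g(y))/(x-y)=\widetilde K^{(\alpha,\frac{m}{n})}(x,y)$ by the definition \eqref{def:tildeker} yields \eqref{eq:tildeKint} at once. The only care required is the tracking of the scalar prefactors $(2\pi)^{(n\pm m)/2}$, $m^{\cdots}$, $n^{\cdots}$ coming from \eqref{def:f}--\eqref{def:g}, which was already done in establishing \eqref{eq:kinMeij}.

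\emph{The differential equations \eqref{eq:eqftilde}--\eqref{eq:eqgtilde}.} These are specializations of the defining ODE \eqref{eq:diff} of the Meijer $G$-function: for $G^{r,0}_{0,r+s}$ with lower parameters $b_1,\dots,b_{r+s}$ one has $\prod_{j=1}^{r+s}(\Delta_z-b_j)\,G=(-1)^{r}\,z\,G$, where $\Delta_z=z\frac{\ud}{\ud z}$. Since $\widetilde f=G^{m,0}_{0,m+n}$ has lower parameters $\{-\nu_j\}_{j=0}^{m+n-1}$ (using $\nu_0=0$), this gives $\prod_{j=0}^{m+n-1}(\Delta_x+\nu_j)\widetilde f(x)=(-1)^m x\widetilde f(x)$; and since $\widetilde g=G^{n,0}_{0,m+n}$ has lower parameters $\{\nu_j\}_{j=0}^{m+n-1}$, it gives $\prod_{j=0}^{m+n-1}(\Delta_y-\nu_j)\widetilde g(y)=(-1)^n y\widetilde g(y)$. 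Equivalently, one may read these off from \eqref{eq:gDf}--\eqref{eq:fDg} at $t=1$, after noting that $\Delta_x/m$ acts as $\Delta_z$ on functions of $z=x^m/(m^mn^n)$; on either route the only subtlety is to keep the exponents $(-1)^m$ and $(-1)^n$ — which originate from the two different orders $m$ and $n$ of the respective $G$-functions — straight.

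\emph{Main obstacle.} There is no real obstacle: the content of the proposition is already present in Section~\ref{sec:props}. Should one wish to make the argument self-contained, a clean consistency check is to differentiate \eqref{eq:tildeKint} under the integral sign (legitimate by the local uniform convergence of the defining series) and recover \eqref{eq:eqftilde}--\eqref{eq:eqgtilde} by applying the relevant differential operators inside the integral.
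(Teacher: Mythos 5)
Your proposal is correct and follows essentially the same route as the paper: the integral representation is obtained by matching \eqref{eq:Kscaling} against the first line of \eqref{eq:integrablereprscale} (equivalently, the relation $\widetilde{K}^{(\alpha,\frac{m}{n})}(x,y)=\frac{n^{n/m}}{x^{1-1/m}}K^{(\alpha,\frac{m}{n})}(mn^{\frac{n}{m}}x^{\frac{1}{m}},mn^{\frac{n}{m}}y^{\frac{1}{m}})$) together with the definition \eqref{def:tildeker}, and the ODEs are the specialization of \eqref{eq:diff} to $G^{m,0}_{0,m+n}$ and $G^{n,0}_{0,m+n}$ with the correct signs $(-1)^m$ and $(-1)^n$.
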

\begin{proof}
The equality \eqref{eq:tildeKint} follows from the fact that
$$\widetilde{K}^{(\alpha,\frac{m}{n})}(x,y)=\frac{n^{\frac{n}{m}}}{x^{1-\frac{1}{m}}}K^{(\alpha,\frac{m}{n})}(mn^{\frac{n}{m}}x^{\frac{1}{m}},mn^{\frac{n}{m}}y^{\frac{1}{m}}) $$
and \eqref{eq:Kscaling}, while \eqref{eq:eqftilde} and \eqref{eq:eqgtilde} can be seen from \eqref{def:tilfg} and \eqref{eq:diff}.
\end{proof}

Next, we come to the operators; cf. \cite{Forrester10,TW93} for more information. Let $\mathcal{K}$ be an operator depending smoothly on a parameter $s$ such that its derivative with respect to $s$ exists, it is readily seen that
\begin{equation}\label{eq:ders}
\frac{\ud}{\ud s}(I-\mathcal{K})^{-1}=(I-\mathcal{K})^{-1}\frac{\ud \mathcal{K}}{\ud s}(I-\mathcal{K})^{-1}.
\end{equation}
For any operator $\mathcal{L}$, the following commutator identity holds:
\begin{equation}\label{eq:LI-Kinv}
[\mathcal{L},(I-\mathcal{K})^{-1}]=(I-\mathcal{K})^{-1}[\mathcal{L},\mathcal{K}](I-\mathcal{K})^{-1},
\end{equation}
where $[\mathcal{L},\mathcal{K}] = \mathcal{L}\mathcal{K}-\mathcal{K}\mathcal{L}$ stands for the standard commutator of two operators. As usual, we set
\begin{equation}\label{def:MD}
\begin{aligned}
&\textrm{$M$: the operator of multiplication by the independent variable}, \\
&\textrm{$D$: the operator of differentiation}.
\end{aligned}
\end{equation}
If necessary, a subscript will be put on an operator to indicate the variable on which it acts.
Suppose that the operator $\mathcal{L}$ has the kernel $L(x,y)$, then
\begin{equation}\label{eq:MDK}
[MD,\mathcal{L}]\doteq ((MD)_x+(MD)_y+I)L(x,y),
\end{equation}
where $\doteq$ is understood as ``has kernel''.

With the operator $\widetilde {\mathcal{K}}_{\alpha,\frac{m}{n}}$ introduced in Section \ref{sec:FredholmDet}, we will denote by $\rho_{\alpha,\frac{m}{n}}$ the kernel of the operator $(I-\widetilde {\mathcal{K}}_{\alpha,\frac{m}{n}})^{-1}$,
and by $R_{\alpha,\frac{m}{n}}$ the kernel of the resolvent operator
$$(I-\widetilde {\mathcal{K}}_{\alpha,\frac{m}{n}})^{-1}-I=\widetilde {\mathcal{K}}_{\alpha,\frac{m}{n}}(I-\widetilde {\mathcal{K}}_{\alpha,\frac{m}{n}})^{-1}=(I-\widetilde {\mathcal{K}}_{\alpha,\frac{m}{n}})^{-1}\widetilde {\mathcal{K}}_{\alpha,\frac{m}{n}}.$$
By definitions, it is readily seen that
\begin{equation}\label{eq:Randrho}
\rho_{\alpha,\frac{m}{n}}(x,y)=R_{\alpha,\frac{m}{n}}(x,y)+\delta(x-y),
\end{equation}
where $\delta(x-y)$ is the Dirac delta function at $x = y$. The kernels $\rho_{\alpha,\frac{m}{n}}'$ and $R_{\alpha,\frac{m}{n}}'$ are
then defined similarly, which correspond to the operator $\widetilde {\mathcal{K}}_{\alpha,\frac{m}{n}}'$.
To this end, we also define $\widetilde {\mathcal{K}}_{\alpha,\frac{m}{n}}^t$ to be the transpose of the operator
$\widetilde {\mathcal{K}}_{\alpha,\frac{m}{n}}$, which acts on distributions and whose kernel is given
by $\widetilde {K}^{(\alpha,\frac{m}{n})}(y,x)\chi_{\widetilde J}(x)$. By definition, it is readily seen that
\begin{equation}\label{eq:transandprime}
(I-\widetilde {\mathcal{K}}^t_{\alpha,\frac{m}{n}})^{-1}(f\chi_{\widetilde J})(x)=(I-\widetilde {\mathcal{K}}'_{\alpha,\frac{m}{n}})^{-1}f(x)\chi_{\widetilde J}(x).
\end{equation}

Finally, we recall that since both $\widetilde {\mathcal{K}}_{\alpha,\frac{m}{n}}$ and $\widetilde {\mathcal{K}}_{\alpha,\frac{m}{n}}'$ are integrable in the sense of Its-Izergin-Korepin-Slavnov \cite{IIKS90}, their resolvent operators are integrable as well. Indeed, for $0 \leq j \leq m+n-1$, by introducing functions
\begin{align}
\mathcal{Q}_j(x;a)&=(I-\widetilde {\mathcal{K}}_{\alpha,\frac{m}{n}})^{-1}\phi_j(x), \label{def:Qxa}
\\
\mathcal{P}_j(y;a)&=(I-\widetilde {\mathcal{K}}'_{\alpha,\frac{m}{n}})^{-1}\psi_j(y), \label{def:Pxa}
\end{align}
where $a$ stands for the collection of parameters $a_1,a_2,\ldots,a_{2\ell}$, the functions $\phi_j$ and $\psi_j$ are given in \eqref{def:phi} and \eqref{def:psi}, respectively, we have the following proposition.
\begin{prop}\label{prop:resolexp}
Let $R_{\alpha,\frac{m}{n}}(x,y)$ and $R'_{\alpha,\frac{m}{n}}(x,y)$ be resolvent kernels of the operators $\widetilde {\mathcal{K}}_{\alpha,\frac{m}{n}}$ and $\widetilde {\mathcal{K}}'_{\alpha,\frac{m}{n}}$, respectively. Then, their explicit expressions are given below:
\begin{equation}\label{eq:explRmn}
R_{\alpha,\frac{m}{n}}(x,y)=\sum_{j=0}^{m+n-1}\frac{\mathcal{Q}_j(x;a)\mathcal{P}_j(y;a)}{x-y}\chi_{\widetilde J}(y),
\end{equation}
and
\begin{equation}\label{eq:explRprimemn}
R'_{\alpha,\frac{m}{n}}(x,y)=\sum_{j=0}^{m+n-1}\frac{\mathcal{Q}_j(y;a)\mathcal{P}_j(x;a)}{y-x}\chi_{\widetilde J}(y),
\end{equation}
where the functions $\mathcal{Q}_j$ and $\mathcal{P}_j$ are given in \eqref{def:Qxa} and \eqref{def:Pxa}, respectively.
\end{prop}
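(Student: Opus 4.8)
The plan is to verify the stated resolvent formulas \eqref{eq:explRmn} and \eqref{eq:explRprimemn} by exploiting the integrable structure of the kernel $\widetilde{K}^{(\alpha,\frac{m}{n})}$ recorded in \eqref{def:tildeker}, following the classical Its--Izergin--Korepin--Slavnov mechanism. Write the kernel as $\widetilde{K}^{(\alpha,\frac{m}{n})}(x,y)\chi_{\widetilde J}(y) = \frac{\sum_{i=0}^{m+n-1}\phi_i(x)\psi_i(y)}{x-y}\chi_{\widetilde J}(y)$; the constraint $\sum_i \phi_i(x)\psi_i(x)=0$ needed for integrability is exactly the statement that the bilinear concomitant $\widetilde{\mathcal{B}}(\widetilde f(x),\widetilde g(y))$ vanishes on the diagonal $x=y$, which in turn follows from the differential equations \eqref{eq:eqftilde}--\eqref{eq:eqgtilde} (the concomitant of two solutions of adjoint operators is annihilated by $\Delta$, hence constant, and it is $0$; combined with the numerator being divisible by $x-y$ this gives smoothness of $\widetilde K$). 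Granting this, the general IIKS theory says the resolvent of an integrable operator is again integrable with the same number of terms, and its functions are obtained by applying $(I-\widetilde{\mathcal K})^{-1}$ (resp.\ $(I-\widetilde{\mathcal K}')^{-1}$) to the original $\phi_j$, $\psi_j$; this is precisely the content of \eqref{def:Qxa}--\eqref{def:Pxa} and the claimed formulas.

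The concrete steps I would carry out are as follows. First, recall that $R_{\alpha,\frac{m}{n}} = \widetilde{\mathcal K}_{\alpha,\frac{m}{n}}(I-\widetilde{\mathcal K}_{\alpha,\frac{m}{n}})^{-1}$, so its kernel is $R_{\alpha,\frac{m}{n}}(x,y) = \big((I-\widetilde{\mathcal K}'_{\alpha,\frac{m}{n}})^{-1}\widetilde K^{(\alpha,\frac{m}{n})}(x,\cdot)\big)(y)\chi_{\widetilde J}(y)$ after transposing appropriately; here one uses \eqref{eq:transandprime} to move the inverse onto the $y$-variable. Second, substitute the explicit numerator and split off the $\phi_i(x)$ factors, which do not depend on $y$: this reduces the computation to evaluating $(I-\widetilde{\mathcal K}'_{\alpha,\frac{m}{n}})^{-1}$ applied to $\frac{\psi_i(\cdot)}{x-\cdot}$. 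Third, use the resolvent identity together with the fact that $\mathcal{P}_i = (I-\widetilde{\mathcal K}'_{\alpha,\frac{m}{n}})^{-1}\psi_i$ to recognise the result as $\sum_i \phi_i(x)\frac{\mathcal P_i(y)}{x-y}$ up to the correction coming from the singular factor $\frac{1}{x-y}$; the standard trick is to write $\frac{1}{x-z} = \frac{1}{x-y} + \frac{y-z}{(x-y)(x-z)}$ (or equivalently to use that the operator with kernel $\frac{\phi_i(x)\psi_i(z)}{x-z}$ commutes appropriately with multiplication), which after resummation shows the $\phi_i(x)$ must be replaced by $\mathcal Q_i(x) = (I-\widetilde{\mathcal K}_{\alpha,\frac{m}{n}})^{-1}\phi_i(x)$ as well. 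Finally, the formula for $R'_{\alpha,\frac{m}{n}}$ is obtained either by the same computation with the roles of $\widetilde{\mathcal K}$ and $\widetilde{\mathcal K}'$ swapped, or simply by observing that $R'_{\alpha,\frac{m}{n}}(x,y) = R_{\alpha,\frac{m}{n}}(y,x)$ restricted via $\chi_{\widetilde J}$, which exchanges the roles of the numerator functions and flips the sign of $x-y$.

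The main obstacle is bookkeeping the interplay between the characteristic function $\chi_{\widetilde J}$, the transpose operator $\widetilde{\mathcal K}^t$, and the prime operator $\widetilde{\mathcal K}'$: one must be careful that $(I-\widetilde{\mathcal K})^{-1}$ acts in the $x$-slot while $(I-\widetilde{\mathcal K}')^{-1}$ acts in the $y$-slot, and that \eqref{eq:transandprime} is invoked with the correct restriction. The genuinely non-routine point, which really is particular to this kernel rather than generic IIKS nonsense, is establishing the diagonal constraint $\sum_{i=0}^{m+n-1}\phi_i(x)\psi_i(x)=0$; once that is in place the derivation is a transcription of the argument in \cite{IIKS90} (see also \cite{Stro14} for the Meijer G-kernel version). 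I would therefore devote most of the writeup to the vanishing of $\widetilde{\mathcal B}(\widetilde f(x),\widetilde g(x))$ via \eqref{eq:eqftilde}--\eqref{eq:eqgtilde}, and then cite the standard resolvent computation, filling in only the few lines needed to identify $\mathcal Q_j$ and $\mathcal P_j$.
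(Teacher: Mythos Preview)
Your proposal is correct and follows exactly the approach the paper takes: the paper's own proof consists solely of the citation ``See \cite{IIKS90} or \cite[Lemma 2.8]{DIZ}'', so your sketch of the IIKS mechanism is already more detailed than what appears there. One small remark: the diagonal constraint $\sum_i\phi_i(x)\psi_i(x)=0$ that you single out as the non-routine point is in fact automatic once the integral representation \eqref{eq:tildeKint} is in hand, since that formula shows $\widetilde K^{(\alpha,m/n)}$ is smooth on the diagonal and hence the numerator in \eqref{def:tildeker} must vanish at $x=y$; you need not invoke the concomitant argument separately.
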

\begin{proof}
See \cite{IIKS90} or \cite[lemma 2.8]{DIZ}.
\end{proof}

In view of \eqref{def:xy}, \eqref{def:Qxa} and \eqref{def:Pxa}, we have that
\begin{equation}\label{eq:xyQP}
x_{j,k}=\lim_{x\in \widetilde{J},~ x \to a_k }\mathcal{Q}_j(x;a),\quad
y_{j,k}=\lim_{y\in \widetilde{J},~ y \to a_k }\mathcal{P}_j(y;a).
\end{equation}
Thus, it is natural to derive differential equations satisfied by $\mathcal{Q}_j$ and $\mathcal{P}_j$ in order to prove our theorems, which we will deal with in the next section.

\subsection{Partial differential equations for $\mathcal{Q}_j$ and $\mathcal{P}_j$ }

\begin{prop}\label{prop:pdeforQ}
With $\mathcal{P}_j(x;a)$ defined in \eqref{def:Pxa}, we set
\begin{equation}\label{def:wij}
w_{i,j}(a)=w_{i,j}(a_1,\ldots,a_{2\ell})=\int_{\widetilde{J}}\phi_i(x)\mathcal{P}_j(x;a) \ud x, \quad 0\leq i,j \leq m+n-1.
\end{equation}
The functions $\mathcal{Q}_j(x;a)$ defined in \eqref{def:Qxa} then satisfy the following system of partial differential equations: for $0\leq j<m+n-1$, we have
\begin{multline}\label{eq:Qpde1}
x\frac{\partial}{\partial x}\mathcal{Q}_j(x;a)
=(-1)^{n+1}w_{j,m+n-1}\mathcal{Q}_0(x;a)-\mathcal{Q}_{j+1}(x;a)
\\-\sum_{k=1}^{2\ell}(-1)^k a_k R_{\alpha,\frac{m}{n}}(x,a_k)\mathcal{Q}_j(a_k;a);
\end{multline}
for $j=m+n-1$, we have
\begin{multline}\label{eq:Qpde2}
x\frac{\partial}{\partial x}\mathcal{Q}_{m+n-1}(x;a)
=(-1)^{n+1}(x-w_{0,0}+w_{m+n-1,m+n-1})\mathcal{Q}_0(x;a)\\
+\sum_{i=1}^{m+n-1}\left((-1)^nw_{0,i}+b_{i-1}\right)\mathcal{Q}_i(x;a)
-\sum_{k=1}^{2\ell}(-1)^ka_kR_{\alpha,\frac{m}{n}}(x,a_k)\mathcal{Q}_{m+n-1}(a_k;a),
\end{multline}
where $b_i$ is given in \eqref{eq:elementary}.

Finally, for the derivative with respect to $a_k$, we have
\begin{equation}\label{eq:Qpde3}
\frac{\partial}{\partial a_k}\mathcal{Q}_j(x;a)=(-1)^kR_{\alpha,\frac{m}{n}}(x,a_k)\mathcal{Q}_j(a_k;a),\quad 0\leq j \leq m+n-1, \quad 1\leq k \leq 2 \ell.
\end{equation}
\end{prop}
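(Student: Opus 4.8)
The plan is to derive the three families of PDEs for $\mathcal{Q}_j$ by combining the resolvent identity \eqref{eq:ders}–\eqref{eq:LI-Kinv}, the operator identity \eqref{eq:MDK} for the commutator with $MD$, and the explicit integrable structure \eqref{eq:explRmn} of the resolvent. The $a_k$-derivative \eqref{eq:Qpde3} is the easiest: differentiating the kernel $\widetilde K^{(\alpha,m/n)}(x,y)\chi_{\widetilde J}(y)$ with respect to $a_k$ produces a sum of Dirac deltas at the endpoints $a_k$ (the endpoints of $\widetilde J$), each weighted by $(-1)^k$ according to whether $a_k$ is a left or right endpoint; feeding this through \eqref{eq:ders} and applying $(I-\widetilde{\mathcal{K}}_{\alpha,m/n})^{-1}$ on both sides collapses the integral against the delta and leaves exactly $(-1)^k R_{\alpha,m/n}(x,a_k)\mathcal{Q}_j(a_k;a)$, using $(I-\widetilde{\mathcal{K}})^{-1}-I \doteq R_{\alpha,m/n}$.

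For the $x$-derivative equations \eqref{eq:Qpde1}–\eqref{eq:Qpde2}, I would first compute $MD\,\mathcal{Q}_j = MD (I-\widetilde{\mathcal{K}}_{\alpha,m/n})^{-1}\phi_j$ and write this as $(I-\widetilde{\mathcal{K}}_{\alpha,m/n})^{-1}(MD\,\phi_j) + [MD,(I-\widetilde{\mathcal{K}}_{\alpha,m/n})^{-1}]\phi_j$. By \eqref{eq:LI-Kinv}, the commutator term equals $(I-\widetilde{\mathcal{K}}_{\alpha,m/n})^{-1}[MD,\widetilde{\mathcal{K}}_{\alpha,m/n}](I-\widetilde{\mathcal{K}}_{\alpha,m/n})^{-1}\phi_j$, and by \eqref{eq:MDK} the kernel of $[MD,\widetilde{\mathcal{K}}_{\alpha,m/n}]$ is $((MD)_x+(MD)_y+I)[\widetilde K^{(\alpha,m/n)}(x,y)\chi_{\widetilde J}(y)]$. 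Using the integrable form \eqref{def:tildeker} one computes $((MD)_x+(MD)_y+I)\frac{\sum_i \phi_i(x)\psi_i(y)}{x-y}$; the singular part telescopes because $(MD)_x + (MD)_y + I$ applied to $\frac{1}{x-y}$ gives $-\frac{1}{x-y}$, so one gets $\frac{\sum_i [(\Delta_x\phi_i)(x)\psi_i(y) + \phi_i(x)(\Delta_y\psi_i)(y)]}{x-y}$ minus the original. Here the key algebraic input is that $\Delta_x\phi_i$ and $\Delta_y\psi_i$ can be re-expressed through the defining recursions \eqref{def:phi}, \eqref{def:psi}: by construction $\phi_{i+1}=-\Delta_x\phi_i$ for $i<m+n-1$ and $\psi_{j-1}$ relates to $\Delta_y\psi_j$ via the $b_i$-recursion, while the top cases $\Delta_x\phi_{m+n-1}$ and $\Delta_y\psi_0$ are reduced using the differential equations \eqref{eq:eqftilde}, \eqref{eq:eqgtilde} satisfied by $\widetilde f,\widetilde g$. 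The differentiation on $\chi_{\widetilde J}(y)$ again contributes boundary delta terms, which produce the last sum $-\sum_k (-1)^k a_k R_{\alpha,m/n}(x,a_k)\mathcal{Q}_j(a_k;a)$.

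The $\Delta_x\phi_j = -\phi_{j+1}$ step gives the $-\mathcal{Q}_{j+1}$ in \eqref{eq:Qpde1} after applying the resolvent; the terms coming from $\Delta_y\psi_i$ acting inside the integral, after integrating against $\mathcal{P}_j$, become the coefficients $w_{j,m+n-1}$, $w_{0,i}$, $w_{0,0}$, $w_{m+n-1,m+n-1}$ and the constants $b_{i-1}$ defined in \eqref{def:wij} and \eqref{eq:elementary}, once one substitutes the explicit resolvent expansion \eqref{eq:explRmn} and uses $\int_{\widetilde J}\phi_i(x)\mathcal{P}_j(x;a)\,dx$ together with the definition $\mathcal{P}_j = (I-\widetilde{\mathcal{K}}'_{\alpha,m/n})^{-1}\psi_j$ and the transpose relation \eqref{eq:transandprime}. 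The principal obstacle, and the place requiring the most care, is the bookkeeping in the second family: isolating the top-index contributions $\Delta_x\phi_{m+n-1}$ via the order-$(m+n)$ differential equation \eqref{eq:eqftilde}, which reintroduces the factor $x$ (hence the $(-1)^{n+1}x\,\mathcal{Q}_0$ term) and a spray of lower-order terms $\sum_i u_i \mathcal{Q}_i$, and making sure every Meijer-G shift of indices is tracked so that the $b_i$'s and $w_{i,j}$'s land with the correct signs and subscripts. I expect the derivation to parallel Strahov's treatment of the Meijer-G kernel in \cite{Stro14} almost verbatim, with the extra wrinkle that asymmetry of the kernel forces consistent use of the primed operator and its transpose throughout.
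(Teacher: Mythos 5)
Your proposal follows the paper's proof in all essentials: the same splitting $MD\,\mathcal{Q}_j=[MD,(I-\widetilde{\mathcal{K}}_{\alpha,\frac{m}{n}})^{-1}]\phi_j+(I-\widetilde{\mathcal{K}}_{\alpha,\frac{m}{n}})^{-1}MD\phi_j$, the same use of \eqref{eq:LI-Kinv} and \eqref{eq:MDK}, the recursion $\Delta_x\phi_j=-\phi_{j+1}$ for the generic case, the order-$(m+n)$ equation \eqref{eq:eqftilde} plus the commutator with $M$ for the top case $j=m+n-1$ (which is where the $w_{0,i}$'s arise), and boundary deltas from $\chi_{\widetilde J}$ for the resolvent sums and for \eqref{eq:Qpde3}. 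The one genuine variation is how you evaluate $\left((MD)_x+(MD)_y+I\right)\widetilde{K}^{(\alpha,\frac{m}{n})}$: you telescope the numerator of the integrable form using the recursions and the differential equations for $\widetilde f,\widetilde g$, whereas the paper gets the same identity in one line by writing $(\Delta_x+\Delta_y)\widetilde{K}^{(\alpha,\frac{m}{n})}(x,y)=\int_0^1 t\frac{\ud}{\ud t}\bigl(\widetilde f(tx)\widetilde g(ty)\bigr)\ud t$ and integrating by parts in \eqref{eq:tildeKint}; both routes yield $(-1)^{n+1}\phi_0(x)\psi_{m+n-1}(y)$. One slip to fix: $\left((MD)_x+(MD)_y+I\right)\frac{1}{x-y}=0$, not $-\frac{1}{x-y}$ (the $-\frac{1}{x-y}$ comes from $(MD)_x+(MD)_y$ alone and is cancelled by the $+I$), so the result of your computation is exactly $\frac{\sum_i\left[(\Delta_x\phi_i)\psi_i+\phi_i(\Delta_y\psi_i)\right]}{x-y}$ with no ``minus the original''; if you keep the spurious $-\widetilde{K}^{(\alpha,\frac{m}{n})}$ it survives the resolvent conjugation and contaminates \eqref{eq:Qpde1} with an extra $-(\mathcal{Q}_j-\phi_j)$ term.
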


\begin{proof}
With the operators $M$ and $D$ defined in \eqref{def:MD}, it is easily seen that
\begin{multline}\label{eq:MDQsplit}
x\frac{\partial}{\partial x}\mathcal{Q}_j(x;a)=MD\mathcal{Q}_j(x;a)=MD(I-\widetilde {\mathcal{K}}_{\alpha,\frac{m}{n}})^{-1}\phi_j(x)
\\
=[MD, (I-\widetilde {\mathcal{K}}_{\alpha,\frac{m}{n}})^{-1}]\phi_j(x)+(I-\widetilde {\mathcal{K}}_{\alpha,\frac{m}{n}})^{-1}MD\phi_j(x),
\end{multline}
for $j=0,1,\ldots,m+n-1$. Our strategy is then to evaluate the two parts in the last equality separately.

For the first one, by \eqref{eq:LI-Kinv}, it follows that
\begin{equation}\label{eq:MDI-Kinv}
[MD, (I-\widetilde {\mathcal{K}}_{\alpha,\frac{m}{n}})^{-1}]\phi_j(x)=(I-\widetilde {\mathcal{K}}_{\alpha,\frac{m}{n}})^{-1}[MD,\widetilde {\mathcal{K}}_{\alpha,\frac{m}{n}}](I-\widetilde {\mathcal{K}}_{\alpha,\frac{m}{n}})^{-1}\phi_j(x).
\end{equation}
In view of \eqref{eq:MDK}, we further have
\begin{align}
[MD,\widetilde {\mathcal{K}}_{\alpha,\frac{m}{n}}] & \doteq ((MD)_x+(MD)_y+I)\left(\widetilde{K}^{(\alpha,\frac{m}{n})}(x,y)\chi_{\widetilde J}(y)\right) \nonumber
\\
& = (\Delta_x+\Delta_y)\widetilde{K}^{(\alpha,\frac{m}{n})}(x,y)\cdot \chi_{\widetilde J}(y)+y\widetilde{K}^{(\alpha,\frac{m}{n})}(x,y)\sum_{k=1}^{2\ell}(-1)^{k-1}\delta(y-a_k)
\nonumber \\
&\qquad \qquad \qquad \qquad \qquad \qquad \qquad \qquad \qquad \qquad \quad \quad
+\widetilde{K}^{(\alpha,\frac{m}{n})}(x,y)\chi_{\widetilde J}(y),
\end{align}
where we have made use of the fact that
$$\frac{\partial}{\partial y}\chi_{\widetilde J}(y)=\sum_{k=1}^{2\ell}(-1)^{k-1}\delta(y-a_k).$$
A key observation here, due to \eqref{eq:tildeKint} and the integration by parts, is that
\begin{multline*}
(\Delta_x+\Delta_y)\widetilde{K}^{(\alpha,\frac{m}{n})}(x,y)=\int_0^1 t \frac{\ud}{\ud t}\left(\widetilde f(tx) \widetilde g(t y)\right)\ud t \\
= \widetilde f(x) \widetilde g(y)-\widetilde{K}^{(\alpha,\frac{m}{n})}(x,y)=(-1)^{n+1}\phi_0(x)\psi_{m+n-1}(y)-\widetilde{K}^{(\alpha,\frac{m}{n})}(x,y);
\end{multline*}
recall that the functions $\phi_i$ and $\psi_i$ are defined in \eqref{def:phi} and \eqref{def:psi}, respectively.
Hence,
\begin{equation}
[MD,\widetilde {\mathcal{K}}_{\alpha,\frac{m}{n}}]  \doteq
(-1)^{n+1}\phi_0(x)\psi_{m+n-1}(y) \chi_{\widetilde J}(y)-\sum_{k=1}^{2\ell}(-1)^{k}a_k\widetilde{K}^{(\alpha,\frac{m}{n})}(x,a_k)\delta(y-a_k).
\end{equation}
This, together with \eqref{eq:MDI-Kinv}, implies that
\begin{align}\label{eq:kerMDcommuInv}
&[MD, (I-\widetilde {\mathcal{K}}_{\alpha,\frac{m}{n}})^{-1}]
\nonumber
\\
&\doteq (-1)^{n+1}\mathcal{Q}_0(x;a)(I-\widetilde {\mathcal{K}}^t_{\alpha,\frac{m}{n}})^{-1}(\psi_{m+n-1}\chi_{\widetilde J})(y)
-\sum_{k=1}^{2\ell}(-1)^{k}a_kR_{\alpha,\frac{m}{n}}(x,a_k)\rho_{\alpha,\frac{m}{n}}(a_k,y)
\nonumber
\\
&=(-1)^{n+1}\mathcal{Q}_0(x;a)\mathcal{P}_{m+n-1}(y;a)\chi_{\widetilde J}(y)
-\sum_{k=1}^{2\ell}(-1)^{k}a_kR_{\alpha,\frac{m}{n}}(x,a_k)\rho_{\alpha,\frac{m}{n}}(a_k,y),
\end{align}
where we have made use of the fact that
$$ (I-\widetilde {\mathcal{K}}^t_{\alpha,\frac{m}{n}})^{-1}(\psi_{m+n-1}\chi_{\widetilde J})(y)=(I-\widetilde {\mathcal{K}}'_{\alpha,\frac{m}{n}})^{-1}\psi_{m+n-1}(y) \cdot \chi_{\widetilde J}(y)=\mathcal{P}_{m+n-1}(y;a)\chi_{\widetilde J}(y);$$
see \eqref{eq:transandprime} for the first equality. It is then immediate that
\begin{multline}\label{eq:splitI}
[MD, (I-\widetilde {\mathcal{K}}_{\alpha,\frac{m}{n}})^{-1}]\phi_j(x)
=(-1)^{n+1}w_{j,m+n-1}\mathcal{Q}_0(x;a)
\\-\sum_{k=1}^{2\ell}(-1)^{k}a_kR_{\alpha,\frac{m}{n}}(x,a_k)\mathcal{Q}_j(a_k;a), ~~ 0\leq j \leq m+n-1.
\end{multline}

For the function $(I-\widetilde {\mathcal{K}}_{\alpha,\frac{m}{n}})^{-1}MD\phi_j(x)$, we observe from \eqref{def:phi} and \eqref{def:tilfg} that, if $0\leq j < m+n-1$,
\begin{multline}\label{eq:secpartcase1}
(I-\widetilde {\mathcal{K}}_{\alpha,\frac{m}{n}})^{-1}MD\phi_j(x)=(I-\widetilde {\mathcal{K}}_{\alpha,\frac{m}{n}})^{-1}\Delta_x\left[(-1)^{n+1-j}(\Delta_x)^j \widetilde f \right](x)
\\
=(I-\widetilde {\mathcal{K}}_{\alpha,\frac{m}{n}})^{-1}\left[(-1)^{n+1-j}(\Delta_x)^{j+1} \widetilde f \right](x)
=-(I-\widetilde {\mathcal{K}}_{\alpha,\frac{m}{n}})^{-1} \phi_{j+1} (x)=-\mathcal{Q}_{j+1}(x;a).
\end{multline}
If $j=m+n-1$, we first obtain from \eqref{eq:eqftilde} and \eqref{def:ai} that
\begin{align*}
(-1)^m x \widetilde f(x)&=\prod_{j=0}^{m+n-1}\left(\Delta_x+\nu_j\right)\widetilde f(x)
=\left[ \sum_{i=0}^{m+n-1}(-1)^{m+n-1-i}b_i(\Delta_x)^{i+1}\right]\widetilde f(x)  \\
& = \left[ \sum_{i=0}^{m+n-2}(-1)^{m+n-1-i}b_i(\Delta_x)^{i+1}\right]\widetilde f(x)
+ (\Delta_x)^{m+n}\widetilde f(x),
\end{align*}
or equivalently, on account of \eqref{def:phi},
\begin{equation}
\Delta_x\phi_{m+n-1}(x)=(-1)^{n+1}x\phi_0(x)+\sum_{i=0}^{m+n-2}b_i\phi_{i+1}(x).
\end{equation}
Hence,
\begin{multline}\label{eq:caseIIphi}
(I-\widetilde {\mathcal{K}}_{\alpha,\frac{m}{n}})^{-1}MD\phi_{m+n-1}(x)=(I-\widetilde {\mathcal{K}}_{\alpha,\frac{m}{n}})^{-1}\left[\Delta_x \phi_{m+n-1} \right](x)
\\=(-1)^{n+1}(I-\widetilde {\mathcal{K}}_{\alpha,\frac{m}{n}})^{-1}M\phi_0(x)+\sum_{i=0}^{m+n-2}b_i\mathcal{Q}_{i+1}(x;a).
\end{multline}
It then remains to evaluate $(I-\widetilde {\mathcal{K}}_{\alpha,\frac{m}{n}})^{-1}M\phi_0(x)$ and a straightforward calculation yields
\begin{multline}\label{eq:I-KM}
(I-\widetilde {\mathcal{K}}_{\alpha,\frac{m}{n}})^{-1}M\phi_0(x)=[(I-\widetilde {\mathcal{K}}_{\alpha,\frac{m}{n}})^{-1}, M]\phi_0(x)+M(I-\widetilde {\mathcal{K}}_{\alpha,\frac{m}{n}})^{-1}\phi_0(x)
\\=[(I-\widetilde {\mathcal{K}}_{\alpha,\frac{m}{n}})^{-1}, M]\phi_0(x)+x\mathcal{Q}_0(x;a).
\end{multline}
By \eqref{eq:LI-Kinv}, one has
\begin{equation*}
[(I-\widetilde {\mathcal{K}}_{\alpha,\frac{m}{n}})^{-1}, M]=(I-\widetilde {\mathcal{K}}_{\alpha,\frac{m}{n}})^{-1}[\widetilde{\mathcal{K}}_{\alpha,\frac{m}{n}},M](I-\widetilde {\mathcal{K}}_{\alpha,\frac{m}{n}})^{-1},
\end{equation*}
and by \eqref{def:tildeker},
\begin{align*}
[\widetilde{\mathcal{K}}_{\alpha,\frac{m}{n}},M]\doteq \widetilde{K}^{(\alpha,\frac{m}{n})}(x,y)\chi_{\widetilde J}(y)\cdot y-x\cdot \widetilde{K}^{(\alpha,\frac{m}{n})}(x,y)\chi_{\widetilde J}(y)
=-\sum_{i=0}^{m+n-1}\phi_i(x)\psi_i(y)\chi_{\widetilde J}(y).
\end{align*}
Combining the above two formulas, we obtain that
\begin{multline}\label{eq:kerI-KinvM}
[(I-\widetilde {\mathcal{K}}_{\alpha,\frac{m}{n}})^{-1}, M] \doteq
-\sum_{i=0}^{m+n-1}\mathcal{Q}_i(x;a)(I-\widetilde {\mathcal{K}}^t_{\alpha,\frac{m}{n}})^{-1}(\psi_{i}\chi_{\widetilde J})(y)
\\=-\sum_{i=0}^{m+n-1}\mathcal{Q}_i(x;a)\mathcal{P}_i(y;a)\chi_{\widetilde J}(y).
\end{multline}
Inserting \eqref{eq:kerI-KinvM} into \eqref{eq:I-KM}, it is readily seen that
\begin{equation}\label{eq:kerI-KinvM2}
(I-\widetilde {\mathcal{K}}_{\alpha,\frac{m}{n}})^{-1}M\phi_0(x)=-\sum_{i=0}^{m+n-1}w_{0,i}\mathcal{Q}_i(x;a)+x\mathcal{Q}_0(x;a),
\end{equation}
which, by \eqref{eq:caseIIphi}, also gives us
\begin{multline}\label{eq:secpartcase2}
(I-\widetilde {\mathcal{K}}_{\alpha,\frac{m}{n}})^{-1}MD\phi_{m+n-1}(x)
=(-1)^{n+1}x\mathcal{Q}_0(x;a)+\sum_{i=1}^{m+n-1}b_{i-1}\mathcal{Q}_{i}(x;a)
\\
+(-1)^{n}\sum_{i=0}^{m+n-1}w_{0,i}\mathcal{Q}_i(x;a).
\end{multline}
The equation \eqref{eq:Qpde1} then follows from \eqref{eq:MDQsplit}, \eqref{eq:splitI} and \eqref{eq:secpartcase1}, while \eqref{eq:Qpde2} follows from \eqref{eq:MDQsplit}, \eqref{eq:splitI} and \eqref{eq:secpartcase2}.

To show \eqref{eq:Qpde3}, we note from \eqref{eq:ders} that
\begin{multline}\label{eq:pfQpde3}
\frac{\partial}{\partial a_k}\mathcal{Q}_j(x;a)=\frac{\partial}{\partial a_k}(I-\widetilde {\mathcal{K}}_{\alpha,\frac{m}{n}})^{-1}\phi_j(x)
=(I-\widetilde {\mathcal{K}}_{\alpha,\frac{m}{n}})^{-1}\frac{\partial}{\partial a_k}\widetilde {\mathcal{K}}_{\alpha,\frac{m}{n}}(I-\widetilde {\mathcal{K}}_{\alpha,\frac{m}{n}})^{-1}\phi_j(x),
\end{multline}
for any $0\leq j \leq m+n-1$ and $1\leq k \leq 2 \ell$. Since
\begin{align}\label{eq:kerKpartialak}
\frac{\partial}{\partial a_k}\widetilde {\mathcal{K}}_{\alpha,\frac{m}{n}}\doteq
\frac{\partial}{\partial a_k}\left(\widetilde{K}^{(\alpha,\frac{m}{n})}(x,y)\chi_{\widetilde J}(y)\right)=(-1)^k\widetilde{K}^{(\alpha,\frac{m}{n})}(x,y)\delta(y-a_k),
\end{align}
it then follows that
\begin{align}\label{eq:kerinvderiv}
(I-\widetilde {\mathcal{K}}_{\alpha,\frac{m}{n}})^{-1}\frac{\partial}{\partial a_k}\widetilde {\mathcal{K}}_{\alpha,\frac{m}{n}}(I-\widetilde {\mathcal{K}}_{\alpha,\frac{m}{n}})^{-1} \doteq (-1)^k R_{\alpha,\frac{m}{n}}(x,a_k)\rho_{\alpha,\frac{m}{n}}(a_k,y).
\end{align}
Inserting the above formula into \eqref{eq:pfQpde3} gives us \eqref{eq:Qpde3}.

This completes the proof of Proposition \ref{prop:pdeforQ}.
\end{proof}

The system of partial differential equations satisfied by the functions $\mathcal{P}_j$ are given in the next proposition.
\begin{prop}\label{prop:pdeforP}
For $j=0$, we have
\begin{multline}\label{eq:Ppde2}
y\frac{\partial}{\partial y}\mathcal{P}_{0}(y;a)
=(-1)^n\sum_{i=0}^{m+n-2}w_{i,m+n-1}\mathcal{P}_i(y;a)+(-1)^n(y-w_{0,0}+w_{m+n-1,m+n-1})\mathcal{P}_{m+n-1}(y;a)
\\
-\sum_{k=1}^{2\ell}(-1)^ka_kR'_{\alpha,\frac{m}{n}}(y,a_k)\mathcal{P}_{0}(a_k;a);
\end{multline}
for $1\leq j \leq m+n-1$, we have
\begin{multline}\label{eq:Ppde1}
y\frac{\partial}{\partial y}\mathcal{P}_j(y;a)
=\mathcal{P}_{j-1}(y;a)+((-1)^{n+1}w_{0,j}-b_{j-1})\mathcal{P}_{m+n-1}(y;a)
\\
-\sum_{k=1}^{2\ell}(-1)^k a_k R'_{\alpha,\frac{m}{n}}(y,a_k)\mathcal{P}_j(a_k;a).
\end{multline}

Finally, for the derivative with respect to $a_k$, we have
\begin{equation}\label{eq:Ppde3}
\frac{\partial}{\partial a_k}\mathcal{P}_j(y;a)=(-1)^kR'_{\alpha,\frac{m}{n}}(y,a_k)\mathcal{P}_j(a_k;a),\quad 0\leq j \leq m+n-1, \quad 1\leq k \leq 2 \ell.
\end{equation}
\end{prop}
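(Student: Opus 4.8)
The plan is to run the argument of Proposition~\ref{prop:pdeforQ} in parallel, replacing the quadruple $(\widetilde{\mathcal{K}}_{\alpha,\frac{m}{n}},\phi_j,\mathcal{Q}_j,R_{\alpha,\frac{m}{n}})$ throughout by its primed counterpart $(\widetilde{\mathcal{K}}'_{\alpha,\frac{m}{n}},\psi_j,\mathcal{P}_j,R'_{\alpha,\frac{m}{n}})$; the only genuinely new input is a family of differential recursions for the $\psi_j$. As in \eqref{eq:MDQsplit}, I would begin from
\[
y\frac{\partial}{\partial y}\mathcal{P}_j(y;a)=MD\,\mathcal{P}_j(y;a)=\bigl[MD,(I-\widetilde{\mathcal{K}}'_{\alpha,\frac{m}{n}})^{-1}\bigr]\psi_j(y)+(I-\widetilde{\mathcal{K}}'_{\alpha,\frac{m}{n}})^{-1}MD\,\psi_j(y),
\]
and evaluate the two summands separately.

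For the commutator term I would use \eqref{eq:LI-Kinv} and \eqref{eq:MDK}. Since $\widetilde{\mathcal{K}}'_{\alpha,\frac{m}{n}}$ has kernel $\widetilde{K}^{(\alpha,\frac{m}{n})}(y,x)\chi_{\widetilde J}(y)$, applying $(MD)_x+(MD)_y+I$ produces, on the one hand, the boundary distributions $-\sum_{k=1}^{2\ell}(-1)^{k}a_k\widetilde{K}^{(\alpha,\frac{m}{n})}(a_k,x)\delta(y-a_k)$ coming from $\partial_y\chi_{\widetilde J}$, and on the other hand the action on $\widetilde{K}^{(\alpha,\frac{m}{n})}(y,x)$, which I would handle by the same integration-by-parts identity used in the proof of Proposition~\ref{prop:pdeforQ}, now with the arguments interchanged: $(\Delta_x+\Delta_y)\widetilde{K}^{(\alpha,\frac{m}{n})}(y,x)=\widetilde f(y)\widetilde g(x)-\widetilde{K}^{(\alpha,\frac{m}{n})}(y,x)=(-1)^{n+1}\phi_0(y)\psi_{m+n-1}(x)-\widetilde{K}^{(\alpha,\frac{m}{n})}(y,x)$. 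Sandwiching between two copies of $(I-\widetilde{\mathcal{K}}'_{\alpha,\frac{m}{n}})^{-1}$ and using the analogue of \eqref{eq:transandprime}, namely $(I-(\widetilde{\mathcal{K}}'_{\alpha,\frac{m}{n}})^{t})^{-1}(\phi_i\chi_{\widetilde J})(x)=(I-\widetilde{\mathcal{K}}_{\alpha,\frac{m}{n}})^{-1}\phi_i(x)\,\chi_{\widetilde J}(x)=\mathcal{Q}_i(x;a)\chi_{\widetilde J}(x)$ (proved exactly as \eqref{eq:transandprime}), I expect to obtain $[MD,(I-\widetilde{\mathcal{K}}'_{\alpha,\frac{m}{n}})^{-1}]\psi_j(y)=(-1)^{n+1}w_{0,j}\mathcal{P}_{m+n-1}(y;a)-\sum_{k=1}^{2\ell}(-1)^{k}a_kR'_{\alpha,\frac{m}{n}}(y,a_k)\mathcal{P}_j(a_k;a)$, the first term coming from $\int_{\widetilde J}\mathcal{Q}_0(y)\psi_j(y)\,\ud y=w_{0,j}$ and the second from the resolvent boundary contribution.

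The second summand is where the argument departs from that of Proposition~\ref{prop:pdeforQ}: whereas $\Delta_x\phi_j=-\phi_{j+1}$ raised the index, the definition \eqref{def:psi} of $\psi_j$ as the polynomial tail $\sum_{i\ge j}b_i(\Delta_y)^{i-j}\widetilde g$ gives a recursion that lowers it. I would check directly from \eqref{def:psi}, and from $b_{m+n-1}=1$ (whence $\psi_{m+n-1}=\widetilde g$), that $\Delta_y\psi_j=\psi_{j-1}-b_{j-1}\psi_{m+n-1}$ for $1\le j\le m+n-1$, and that for $j=0$ the differential equation \eqref{eq:eqgtilde}, rewritten as $\Delta_y\psi_0=\prod_{j=0}^{m+n-1}(\Delta_y-\nu_j)\widetilde g=(-1)^{n}y\,\widetilde g(y)$, gives $MD\,\psi_0=(-1)^{n}M\psi_{m+n-1}$. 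Applying $(I-\widetilde{\mathcal{K}}'_{\alpha,\frac{m}{n}})^{-1}$ then yields $\mathcal{P}_{j-1}-b_{j-1}\mathcal{P}_{m+n-1}$ for $1\le j\le m+n-1$; for $j=0$ one still must evaluate $(I-\widetilde{\mathcal{K}}'_{\alpha,\frac{m}{n}})^{-1}M\psi_{m+n-1}$, for which I would use \eqref{eq:LI-Kinv} together with $[\widetilde{\mathcal{K}}'_{\alpha,\frac{m}{n}},M]\doteq(y-x)\widetilde{K}^{(\alpha,\frac{m}{n})}(y,x)\chi_{\widetilde J}(y)=\sum_{i=0}^{m+n-1}\psi_i(x)\phi_i(y)\chi_{\widetilde J}(y)$ (from \eqref{def:tildeker}), arriving at $(I-\widetilde{\mathcal{K}}'_{\alpha,\frac{m}{n}})^{-1}M\psi_{m+n-1}(y)=\sum_{i=0}^{m+n-1}w_{i,m+n-1}\mathcal{P}_i(y;a)+y\,\mathcal{P}_{m+n-1}(y;a)$.

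Adding the two summands gives \eqref{eq:Ppde1} at once, and \eqref{eq:Ppde2} after collecting the coefficient of $\mathcal{P}_{m+n-1}$ and writing $(-1)^{n+1}w_{0,0}+(-1)^{n}w_{m+n-1,m+n-1}=(-1)^{n}(w_{m+n-1,m+n-1}-w_{0,0})$. Finally, \eqref{eq:Ppde3} is obtained exactly as \eqref{eq:Qpde3}: differentiate $(I-\widetilde{\mathcal{K}}'_{\alpha,\frac{m}{n}})^{-1}$ via \eqref{eq:ders}, note $\frac{\partial}{\partial a_k}\widetilde{\mathcal{K}}'_{\alpha,\frac{m}{n}}\doteq(-1)^{k}\widetilde{K}^{(\alpha,\frac{m}{n})}(a_k,x)\delta(y-a_k)$, and use $(I-\widetilde{\mathcal{K}}'_{\alpha,\frac{m}{n}})^{-1}\widetilde{\mathcal{K}}'_{\alpha,\frac{m}{n}}\doteq R'_{\alpha,\frac{m}{n}}$. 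The main obstacle here is organizational rather than conceptual: one must keep rigorously separate the three related operators $\widetilde{\mathcal{K}}_{\alpha,\frac{m}{n}}$, $\widetilde{\mathcal{K}}'_{\alpha,\frac{m}{n}}$ and $(\widetilde{\mathcal{K}}'_{\alpha,\frac{m}{n}})^{t}$ and track every sign $(-1)^{n}$, $(-1)^{k}$ produced by the orientation of $\widetilde J$ and by the asymmetry $\widetilde K^{(\alpha,\frac{m}{n})}(x,y)\neq\widetilde K^{(\alpha,\frac{m}{n})}(y,x)$; the single place where genuinely new work is required is verifying the ``downward'' recursions for the $\psi_j$ and the identification $\psi_{m+n-1}=\widetilde g$.
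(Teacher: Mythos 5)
Your proposal is correct and follows exactly the route the paper intends: the paper's own proof of Proposition~\ref{prop:pdeforP} is a two-line remark reducing everything to the argument of Proposition~\ref{prop:pdeforQ} plus the single new identity $MD\,\psi_j=\psi_{j-1}-b_{j-1}\psi_{m+n-1}$ for $1\le j\le m+n-1$ and $MD\,\psi_0=(-1)^n y\,\psi_{m+n-1}$, which is precisely the ``downward'' recursion you isolate and verify (together with $\psi_{m+n-1}=\widetilde g$ from $b_{m+n-1}=1$). Your bookkeeping of the primed operators, the transpose identity, and the signs in the $j=0$ case all check out against \eqref{eq:Ppde1}--\eqref{eq:Ppde3}.
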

\begin{proof}
The proof is similar to that of Proposition \ref{prop:pdeforQ}. The main difference is that we need to use the following fact:
\begin{align}
&MD\psi_j(y)
\nonumber
\\
&=\Delta_y\left[\sum_{i=0}^{m+n-1-j}b_{i+j}(\Delta_y)^i \widetilde g \right](y)=
\sum_{i=0}^{m+n-1-j}b_{i+j}(\Delta_y)^{i+1} \widetilde g(y)
\nonumber
\\
&=\left\{
    \begin{array}{ll}
     \sum_{i=1}^{m+n-j}b_{i+j-1}(\Delta_y)^{i} \widetilde g(y)=\psi_{j-1}(y)-b_{j-1}\psi_{m+n-1}(y), & \hbox{$1\leq j \leq m+n-1$,} \\
     \prod_{j=0}^{m+n-1}\left(\Delta_y-\nu_j\right)\widetilde g(y)=(-1)^ny\psi_{m+n-1}(y), & \hbox{$j=0$,}
    \end{array}
  \right.
\end{align}
which can be seen from \eqref{def:psi}, \eqref{def:tilfg}, \eqref{def:ai} and \eqref{eq:eqgtilde}. This also explains why the results are stated for $j=0$ and $1\leq j \leq m+n-1$, respectively. We leave the details to interested readers.

This completes the proof of Proposition \ref{prop:pdeforP}.
\end{proof}

Finally, we also need to relate the partial derivatives of $w_{i,j}$ and $R_{\alpha,\frac{m}{n}}$ to $\mathcal{P}_j$ and $\mathcal{Q}_j$. The relations are summarized in the following proposition.
\begin{prop}\label{prop:pdeforR}
With $w_{i,j}$ defined in \eqref{def:wij}, we have
\begin{equation}\label{eq:partialw}
\frac{\partial}{\partial a_k}w_{i,j}=(-1)^k\mathcal{Q}_{i}(a_k;a)\mathcal{P}_{j}(a_k;a), ~~1\leq k \leq 2\ell, ~~0\leq i,j \leq m+n-1.
\end{equation}
Furthermore, the resolvent kernel $R_{\alpha,\frac{m}{n}}(x,y)$ of the operator $\widetilde {\mathcal{K}}_{\alpha,\frac{m}{n}}$ satisfies
\begin{equation}\label{eq:partialR1}
\frac{\partial}{\partial a_k}R_{\alpha,\frac{m}{n}}(x,y)=(-1)^k R_{\alpha,\frac{m}{n}}(x,a_k)\rho_{\alpha,\frac{m}{n}}(a_k,y), \qquad 1 \leq k \leq 2\ell
\end{equation}
and
\begin{equation}\label{eq:partialR2}
\left(\Delta_x+\Delta_y+\sum_{k=1}^{2\ell}\Delta_{a_k}+I \right) R_{\alpha,\frac{m}{n}}(x,y)
=(-1)^{n+1}\mathcal{Q}_0(x)\mathcal{P}_{m+n-1}(y)\chi_{\widetilde J}(y).
\end{equation}
\end{prop}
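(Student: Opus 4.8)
The plan is to treat the three identities separately, each following from the differentiation formula \eqref{eq:ders} and the commutator identity \eqref{eq:LI-Kinv} that were recorded in the Preliminaries, together with the expressions \eqref{def:phi}--\eqref{def:psi} for $\phi_i,\psi_i$ and the resolvent expansion of Proposition \ref{prop:resolexp}. For \eqref{eq:partialw}, I would differentiate the definition $w_{i,j}=\int_{\widetilde J}\phi_i(x)\mathcal P_j(x;a)\,\mathrm dx$ under the integral sign. The integrand depends on $a_k$ both through the domain $\widetilde J$ (giving a boundary term $(-1)^{k-1}\phi_i(a_k)\mathcal P_j(a_k;a)$ from $\partial_{a_k}\chi_{\widetilde J}=\sum(-1)^{k-1}\delta(\cdot-a_k)$) and through $\mathcal P_j(x;a)$ itself. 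For the latter I use \eqref{eq:Ppde3}, i.e. $\partial_{a_k}\mathcal P_j(y;a)=(-1)^k R'_{\alpha,\frac{m}{n}}(y,a_k)\mathcal P_j(a_k;a)$; the resulting integral $\int_{\widetilde J}\phi_i(x)R'_{\alpha,\frac{m}{n}}(x,a_k)\,\mathrm dx$ is, by the definition of $R'$ as the kernel of $(I-\widetilde{\mathcal K}'_{\alpha,\frac{m}{n}})^{-1}-I$ together with $\phi_i\chi_{\widetilde J}$ being supported on $\widetilde J$, exactly $\mathcal Q_i(a_k;a)-\phi_i(a_k)$. Adding the boundary term cancels the $\phi_i(a_k)\mathcal P_j(a_k;a)$ pieces and leaves $(-1)^k\mathcal Q_i(a_k;a)\mathcal P_j(a_k;a)$.

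For \eqref{eq:partialR1}, the identity is really just the statement already contained in \eqref{eq:kerinvderiv}: writing $R_{\alpha,\frac{m}{n}}=$ kernel of $(I-\widetilde{\mathcal K}_{\alpha,\frac{m}{n}})^{-1}\widetilde{\mathcal K}_{\alpha,\frac{m}{n}}$ and applying \eqref{eq:ders} together with \eqref{eq:kerKpartialak} for $\partial_{a_k}\widetilde{\mathcal K}_{\alpha,\frac{m}{n}}$, one gets $\partial_{a_k}(I-\widetilde{\mathcal K}_{\alpha,\frac{m}{n}})^{-1}\doteq(-1)^k R_{\alpha,\frac{m}{n}}(x,a_k)\rho_{\alpha,\frac{m}{n}}(a_k,y)$, and since $\partial_{a_k}$ kills the $\delta(x-y)$ part of $\rho$ (it is $a$-independent), the same formula holds with $R$ on the left. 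I would simply cite \eqref{eq:kerinvderiv} and note that the Dirac part is inert.

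For \eqref{eq:partialR2}, I apply the operator $MD$ (which acts as $\Delta_x$ on the first variable and $\Delta_y$ on the second, plus the extra $+I$ recorded in \eqref{eq:MDK}) to $(I-\widetilde{\mathcal K}_{\alpha,\frac{m}{n}})^{-1}\widetilde{\mathcal K}_{\alpha,\frac{m}{n}}$ and compare with the already-computed commutator $[MD,(I-\widetilde{\mathcal K}_{\alpha,\frac{m}{n}})^{-1}]$ from \eqref{eq:kerMDcommuInv}; the key input is the ``miracle'' identity $(\Delta_x+\Delta_y)\widetilde K^{(\alpha,\frac{m}{n})}(x,y)=(-1)^{n+1}\phi_0(x)\psi_{m+n-1}(y)-\widetilde K^{(\alpha,\frac{m}{n})}(x,y)$ proved via integration by parts in the proof of Proposition \ref{prop:pdeforQ}. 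The sum $\sum_k\Delta_{a_k}$ appears because each $a_k$-derivative contributes, by \eqref{eq:partialR1}, a term $(-1)^k a_k R_{\alpha,\frac{m}{n}}(x,a_k)\rho_{\alpha,\frac{m}{n}}(a_k,y)$; these are precisely the boundary terms that arose in $[MD,\widetilde{\mathcal K}_{\alpha,\frac{m}{n}}]$ with the opposite sign, so adding $\sum_k\Delta_{a_k}R$ to $(\Delta_x+\Delta_y+I)R$ cancels all the local contributions at the endpoints and leaves only the rank-one term $(-1)^{n+1}\mathcal Q_0(x)\mathcal P_{m+n-1}(y)\chi_{\widetilde J}(y)$.

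The main obstacle is bookkeeping: one must keep careful track of (i) the alternating signs $(-1)^{k-1}$ versus $(-1)^k$ coming from $\partial_{a_k}\chi_{\widetilde J}$, (ii) the factor $a_k$ that $MD$ introduces on the delta functions $\delta(y-a_k)$, and (iii) the distinction between $\rho$ (with the $\delta(x-y)$ term) and $R$ (without it) when the resolvent appears inside an integral against something supported on $\widetilde J$. None of these steps is deep — each is an instance of the Tracy--Widom/Strahov machinery already deployed in the proof of Proposition \ref{prop:pdeforQ} — so the cleanest exposition simply points to the relevant displayed equations there and carries out the sign accounting; I would present \eqref{eq:partialw} and \eqref{eq:partialR1} in a couple of lines each and spend the bulk of the argument assembling \eqref{eq:partialR2} from \eqref{eq:kerMDcommuInv} and \eqref{eq:partialR1}.
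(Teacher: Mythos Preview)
Your approach is essentially identical to the paper's: differentiate $w_{i,j}$ under the integral and use \eqref{eq:Ppde3} for \eqref{eq:partialw}; read \eqref{eq:partialR1} off from \eqref{eq:kerinvderiv} after noting the $\delta(x-y)$ part is $a$-independent; and deduce \eqref{eq:partialR2} by identifying $[MD,(I-\widetilde{\mathcal K})^{-1}]$ with $(\Delta_x+\Delta_y+I)R$ via \eqref{eq:MDK}, then invoking \eqref{eq:kerMDcommuInv} and absorbing the boundary sum through \eqref{eq:partialR1}. One small slip: the boundary contribution from differentiating the domain is $(-1)^{k}\phi_i(a_k)\mathcal P_j(a_k;a)$, not $(-1)^{k-1}$, since $\partial_{a_k}\chi_{\widetilde J}(x)=(-1)^{k}\delta(x-a_k)$ (odd $k$ is a left endpoint, so increasing $a_k$ removes mass); with the correct sign the two $\phi_i(a_k)$ terms indeed combine to give $(-1)^k\mathcal Q_i(a_k)\mathcal P_j(a_k)$, exactly as you claim and as the paper computes.
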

\begin{proof}
By \eqref{def:wij}, we obtain from \eqref{eq:Ppde3}, Proposition \ref{prop:resolexp}, \eqref{def:Qxa} and a straightforward calculation
that
\begin{align}
\frac{\partial}{\partial a_k}w_{i,j}&=(-1)^k\mathcal{P}_j(a_k;a)\left(\phi_{i}(a_k)+\int_{\widetilde{J}}\phi_i(x)R'_{\alpha,\frac{m}{n}}(x,a_k)\ud x\right)
\nonumber \\
&= (-1)^k\mathcal{P}_j(a_k;a)\cdot \int_0^\infty \phi_i(x)(R_{\alpha,\frac{m}{n}}(a_k,x)+\delta(x-a_k))\ud x
= (-1)^k\mathcal{Q}_{i}(a_k;a)\mathcal{P}_{j}(a_k;a), \nonumber
\end{align}
where in the last step we have used \eqref{eq:Randrho} which asserts that
\begin{equation*}
(I-\widetilde {\mathcal{K}}_{\alpha,\frac{m}{n}})^{-1} \doteq R_{\alpha,\frac{m}{n}}(x,y)+\delta(x-y).
\end{equation*}
Thus,
$\frac{\partial}{\partial a_k}(I-\widetilde {\mathcal{K}}_{\alpha,\frac{m}{n}})^{-1}\doteq \frac{\partial}{\partial a_k}R_{\alpha,\frac{m}{n}}(x,y). $
This, together with \eqref{eq:ders} and \eqref{eq:kerinvderiv}, gives us \eqref{eq:partialR1}.

Finally, we see from \eqref{eq:MDK} that
\begin{equation}
[MD, (I-\widetilde {\mathcal{K}}_{\alpha,\frac{m}{n}})^{-1}\widetilde {\mathcal{K}}_{\alpha,\frac{m}{n}}]\doteq
\left(\Delta_x+\Delta_y+I \right) R_{\alpha,\frac{m}{n}}(x,y).
\end{equation}
On the other hand, it is easily seen that
\begin{equation}
[MD, (I-\widetilde {\mathcal{K}}_{\alpha,\frac{m}{n}})^{-1}\widetilde {\mathcal{K}}_{\alpha,\frac{m}{n}}]
=[MD, (I-\widetilde {\mathcal{K}}_{\alpha,\frac{m}{n}})^{-1}-I]=[MD, (I-\widetilde {\mathcal{K}}_{\alpha,\frac{m}{n}})^{-1}].
\end{equation}
A combination of the above two formulas and \eqref{eq:kerMDcommuInv} shows that
\begin{multline}
\left(\Delta_x+\Delta_y+I \right) R_{\alpha,\frac{m}{n}}(x,y)
=(-1)^{n+1}\mathcal{Q}_0(x;a)\mathcal{P}_{m+n-1}(y;a)\chi_{\widetilde J}(y)\\-\sum_{k=1}^{2\ell}(-1)^{k}a_kR_{\alpha,\frac{m}{n}}(x,a_k)\rho_{\alpha,\frac{m}{n}}(a_k,y),
\end{multline}
which, in view of \eqref{eq:partialR1}, is equivalent to \eqref{eq:partialR2}.

This completes the proof of Proposition \ref{prop:pdeforR}.
\end{proof}

\subsection{Proofs of Propositions \ref{prop:PDEforxy} and \ref{prop:Hamiltonian}}
\paragraph{Proof of Proposition \ref{prop:PDEforxy}}
In view of \eqref{eq:xyQP}, it is readily seen from \eqref{eq:Qpde3} and \eqref{eq:explRmn} that
\begin{equation*}
\frac{\partial x_{j,k}}{\partial a_i}=(-1)^i R_{\alpha,\frac{m}{n}}(a_k,a_i)x_{j,i}
=(-1)^i \frac{x_{j,i}}{a_k-a_i}\sum_{l=0}^{m+n-1}x_{l,k}y_{l,i},
\end{equation*}
for $1\leq k \neq i \leq 2 \ell$ and $0\leq j \leq m+n-1$, which is \eqref{eq:xjkpi}. If $k=i$, we have
\begin{equation}\label{case:k=i}
\frac{\partial x_{j,k}}{\partial a_k}=\left(\frac{\partial}{\partial x}+\frac{\partial}{\partial a_k}\right)\mathcal{Q}_j(x;a)\Big{|}_{x=a_k}.
\end{equation}
Since
\begin{equation}\label{eq:vinw}
v_j=(-1)^nw_{j,m+n-1}, \qquad 0 \leq j \leq m+n-1,
\end{equation}
it then follows from \eqref{case:k=i}, \eqref{eq:Qpde1} and \eqref{eq:Qpde3} that, for $0\leq j \leq m+n-2$,
\begin{align*}
a_k\frac{\partial x_{j,k}}{\partial a_k}&=-v_jx_{0,k}-x_{j+1,k}-\sum_{i=1,i\neq k}^{2\ell}(-1)^ia_iR_{\alpha,\frac{m}{n}}(a_k,a_i)x_{j,i}
\\
&=-v_jx_{0,k}-x_{j+1,k}-\sum_{i=1, i\neq k}^{2\ell}(-1)^i\frac{a_i x_{j,i}}{a_k-a_i}\sum_{l=0}^{m+n-1}x_{l,k}y_{l,i},
\end{align*}
which is \eqref{eq:xjkpk1}. The equation \eqref{eq:xjkpk2} can be proved similarly with the aid of \eqref{eq:Qpde2} and the fact that
\begin{equation}\label{eq:uinw}
u_j=(-1)^nw_{0,j}+b_{j-1}, \qquad 0 \leq j \leq m+n-1.
\end{equation}
The equation \eqref{eq:ujpartial} for $u_j$ follows directly from \eqref{eq:uinw}, \eqref{eq:partialw} and \eqref{eq:xyQP}.

On account of Proposition \ref{prop:pdeforP}, \eqref{eq:vinw} and \eqref{eq:partialw}, the other equations for $y_{j,k}$ and $v_j$ can be proved in a manner similar, and we omit the details.

This completes the proof of Proposition \ref{prop:PDEforxy}.

\paragraph{Proof of Proposition \ref{prop:Hamiltonian}}
To show \eqref{eq:Hkexplicit}, we first note that
$$\frac{\partial}{\partial a_k} \log \det \left(I- \widetilde{\mathcal{K}}_{\alpha,\frac{m}{n}} \right )=- \tr \left((I-\widetilde{\mathcal{K}}_{\alpha,\frac{m}{n}})^{-1}\frac{\partial \widetilde{\mathcal{K}}_{\alpha,\frac{m}{n}} }{\partial a_k}\right), \quad 1\leq k \leq 2\ell,$$
and by \eqref{eq:kerKpartialak},
$$(I-\widetilde{\mathcal{K}}_{\alpha,\frac{m}{n}})^{-1}\frac{\partial \widetilde{\mathcal{K}}_{\alpha,\frac{m}{n}} }{\partial a_k}
\doteq (-1)^k R_{\alpha,\frac{m}{n}}(x,y)\delta(y-a_k). $$
Thus, it is readily seen that
\begin{equation}\label{eq:partiallogdet}
\frac{\partial}{\partial a_k} \log \det \left(I- \widetilde{\mathcal{K}}_{\alpha,\frac{m}{n}} \right )=(-1)^{k+1}R_{\alpha,\frac{m}{n}}(a_k,a_k).
\end{equation}
Since
$$R_{\alpha,\frac{m}{n}}(a_k,a_k)=\sum_{j=0}^{m+n-1}\mathcal{Q}_j'(a_k;a)\mathcal{P}_j(a_k,a),$$
it then follows from \eqref{eq:xyQP}, \eqref{eq:Qpde1}, \eqref{eq:Qpde2}, \eqref{eq:vinw} and \eqref{eq:uinw} that
\begin{multline}\label{eq:akRinxy}
a_kR_{\alpha,\frac{m}{n}}(a_k,a_k)=\left(-\sum_{j=0}^{m+n-1}v_jy_{j,k}+(-1)^{n+1}a_k y_{m+n-1,k}\right)x_{0,k}+\sum_{j=0}^{m+n-1}u_jx_{j,k}y_{m+n-1,k}
\\-\sum_{j=0}^{m+n-2}x_{j+1,k}y_{j,k}-\sum_{j=0}^{m+n-1}\sum_{i=1,i\neq k}^{2\ell}(-1)^i a_i R_{\alpha,\frac{m}{n}}(a_k,a_i) x_{j,i}y_{j,k}.
\end{multline}
By the change of variables \eqref{eq:xytopq}, it is also readily seen that
$$x_{j,k}y_{i,k}=(-1)^{k+1}q_{j,k}p_{i,k}.$$
This, together with \eqref{eq:partiallogdet} and \eqref{eq:akRinxy}, gives us \eqref{eq:Hkexplicit}.

The other equations \eqref{eq:involution}--\eqref{eq:uvpartialder} then follows from straightforward calculations, with the aid of Proposition \ref{prop:PDEforxy}, \eqref{eq:xytopq} and \eqref{eq:Hkexplicit}.

This completes the proof of Proposition \ref{prop:Hamiltonian}.

\subsection{Proof of Theorem \ref{prop:oneinterval}}
It is easily seen that the equations \eqref{eq:eqforx}--\eqref{eq:ICuv} correspond to Proposition \ref{prop:PDEforxy}
with $\widetilde {J}=(0,s)$, i.e., $\ell=1$, $a_1=0$ and $a_2=s$. To show \eqref{eq:Gapformular}, we first establish the relevant results for the Fredholm determinant $\det \left(I- \widetilde{\mathcal{K}}_{\alpha,\frac{m}{n}}\Big{|}_{(0,s)} \right )$.

By \eqref{eq:partiallogdet}, it follows that
\begin{equation*}
\frac{\ud}{\ud s} \log \det \left(I- \widetilde{\mathcal{K}}_{\alpha,\frac{m}{n}}\Big{|}_{(0,s)} \right) =-R(s),
\end{equation*}
where $R(s):=R_{\alpha,\frac{m}{n}}(s,s)$. Since we are dealing with the special case $\widetilde {J}=(0,s)$, it is also  immediate from \eqref{eq:partialR2} and \eqref{eq:xyQP} that
$$(sR(s))'=R(s)+s\frac{\ud}{\ud s}R(s)=(-1)^{n+1}x_0(s)y_{m+n-1}(s).$$
A combination of the above two formulas implies that
\begin{align}\label{eq:logdettilde}
&\log \det \left(I- \widetilde{\mathcal{K}}_{\alpha,\frac{m}{n}}\Big{|}_{(0,s)} \right)
=-\int_0^s R(\varrho)\ud \varrho=(-1)^n\int_0^s\frac{1}{\varrho}\int_{0}^\varrho x_0(t)y_{m+n-1}(t)\ud t \ud \varrho
\nonumber
\\
&=(-1)^n\int_0^s\int_t^s \frac{1}{\varrho}\ud \varrho \cdot x_0(t)y_{m+n-1}(t) \ud t=
(-1)^n\int_0^s \log\left(\frac{s}{t}\right) x_0(t)y_{m+n-1}(t) \ud t
\nonumber
\\
&=\int_0^s \frac{v_0(t)}{t} \ud t,
\end{align}
where we have made use of \eqref{eq:eqforv} and the integration by parts in the last step.
In view of \eqref{eq:relKtoKtilde}, we have that
\begin{equation}\label{eq:FKtoFktilde}
F_{\alpha,\frac{m}{n}}(s)=\det \left(I- \mathcal{K}_{\alpha,\frac{m}{n}}\Big{|}_{(0,s)} \right)=\det \left(I- \widetilde{\mathcal{K}}_{\alpha,\frac{m}{n}}\Big{|}_{(0,\frac{s^m}{m^mn^n})} \right).
\end{equation}
Combining the above two formulas and a change of variable $t \to \frac{\tau^m}{m^mn^n}$ gives us \eqref{eq:Gapformular}.

This completes the proof of Theorem \ref{prop:oneinterval}.

\appendix

\section{The Meijer G-function and Wright's generalized Bessel function}
\subsection{The Meijer G-function}
By definition, the Meijer G-function is given by the
following contour integral in the complex plane:
\begin{equation}\label{def:Meijer}
G^{m,n}_{p,q}\left({a_1,\ldots,a_p \atop b_1,\ldots,b_q}\Big{|}
z\right)
=\frac{1}{2\pi i}\int_\gamma
\frac{\prod_{j=1}^m\Gamma(b_j+u)\prod_{j=1}^n\Gamma(1-a_j-u)}
{\prod_{j=m+1}^q\Gamma(1-b_j-u)\prod_{j=n+1}^p\Gamma(a_j+u)}z^{-u}
\ud u,
\end{equation}
where $\Gamma$ denotes the usual gamma function and the branch cut
of $z^{-u}$ is taken along the negative real axis. It is also
assumed that
\begin{itemize}
  \item $0\leq m\leq q$ and $0\leq n \leq p$, where $m,n,p$ and $q$
  are integer numbers;
  \item The real or complex parameters $a_1,\ldots,a_p$ and
  $b_1,\ldots,b_q$ satisfy the conditions
  \begin{equation*}
  a_k-b_j \neq 1,2,3, \ldots, \quad \textrm{for $k=1,2,\ldots,n$ and $j=1,2,\ldots,m$,}
  \end{equation*}
  i.e., none of the poles of $\Gamma(b_j+u)$, $j=1,2,\ldots,m$ coincides
  with any poles of $\Gamma(1-a_k-u)$, $k=1,2,\ldots,n$.
\end{itemize}
The contour $\gamma$ is chosen in such a way that all the poles of
$\Gamma(b_j+u)$, $j=1,\ldots,m$ are on the left of the path, while
all the poles of $\Gamma(1-a_k-u)$, $k=1,\ldots,n$ are on the right,
which is usually taken to go from $-i\infty$ to $i\infty$. Most of the known special functions can be viewed as special cases of the Meijer G-functions. For more details, we refer to the references \cite{Luke,DLMF}.

From the definition \eqref{def:Meijer}, it is easily seen that
   \begin{equation}\label{eq:multiply}
   z^{\alpha}G^{m,n}_{p,q}\left({a_1,\ldots,a_p \atop b_1,\ldots,b_q}\Big{|}z\right)=G^{m,n}_{p,q}\left({a_1+\alpha,\ldots,a_p+\alpha \atop b_1+\alpha,\ldots,b_q+\alpha}\Big{|}z\right),
   \end{equation}

Finally, we note that the Meijer G-function satisfies the following linear
differential equation of order $\max(p,q)$:
\begin{equation}\label{eq:diff}
\bigg[(-1)^{p-m-n}z\prod_{j=1}^{p}\left(z\frac{\ud}{\ud
z}-a_j+1\right)
-\prod_{j=1}^{q}\left(z\frac{\ud}{\ud
z}-b_j\right)\bigg]G^{m,n}_{p,q}\left({a_1,\ldots,a_p \atop b_1,\ldots,b_q}\Big{|}
z\right)=0;
\end{equation}
see \cite[formula 16.21.1]{DLMF}.

\subsection{Wright's generalized Bessel function}
The Wright's generalized Bessel function $J_{a,b}$ defined in \eqref{eq:Wright} is an entire function of $z$ depending on two parameters $a$ and $b$. It generalizes the Bessel function due to the relation
$$J_{\nu+1,1}\left(\frac{z^2}{4}\right)=\left(\frac{z}{2}\right)^{-\nu}J_{\nu}(z),$$
where $J_{\nu}$ stands for the Bessel function of the first kind.
The Wright's generalized Bessel functions are related to the Meijer G-functions if $b$ is a rational number. Indeed, by \cite[formula (13)]{GLM00} and \cite[formula (22)]{GLM99}, it follows that
\begin{multline}\label{eq:WrightinMeijer}
J_{a,\frac{m}{n}}(z)=(2\pi)^{\frac{m-n}{2}}m^{-a+\frac{1}{2}}n^{\frac{1}{2}}\\G^{n,0}_{0,m+n}\left({- \atop 0,\frac{1}{n},\ldots,\frac{n-1}{n},1-\frac{a}{m},1-\frac{a}{m}-\frac{1}{m},\ldots,1-\frac{a}{m}-\frac{m-1}{m} }\Big{|}
\frac{z^n}{m^mn^n}\right),
\end{multline}
where $m,n\in\mathbb{N}$. This, together with \eqref{eq:diff}, implies that $J_{a,\frac{m}{n}}$ satisfies the following linear differential equation of order $m+n$:
\begin{equation}\label{eq:diffofJ}
\prod_{j=0}^{n-1}\left(\frac{z}{n}\frac{\ud}{\ud
z}-\frac{j}{n}\right)\prod_{i=0}^{m-1}\left(\frac{z}{n}\frac{\ud}{\ud
z}-1+\frac{a+i}{m}\right)J_{a,\frac{m}{n}}(z)=(-1)^n\frac{z^n}{m^mn^n}J_{a,\frac{m}{n}}(z).
\end{equation}

%

\section*{Acknowledgment}
The author thanks the anonymous referees for their careful reading and constructive suggestions. This work is partially supported by The Program for Professor of Special Appointment (Eastern Scholar) at Shanghai Institutions of Higher Learning (No. SHH1411007), by National Natural Science Foundation of China (No. 11501120) and by Grant EZH1411513 from Fudan University.



\end{document}